\newtheorem{theorem}{Theorem}
\newtheorem{lemma}{Lemma}
\newtheorem{corollary}{Corollary}
\newtheorem{definition}{Definition}
\newtheorem{remark}{Remark}
\newenvironment{proof}%
               {\par\noindent%
               \setlength{\parindent}{0em}%
               \setlength{\parskip}{1ex}%
               \textit{Proof:\nopagebreak[2]
}}{\hfill\rule{1.5ex}{1.5ex}\par}
\newcommand{\beq}{\begin{equation}}
\newcommand{\eeq}{\end{equation}}
\newcommand{\beqa}{\begin{eqnarray}}
\newcommand{\eeqa}{\end{eqnarray}}
\newcommand {\mc}{\mathcal}
\newcommand {\mb}{\mathbf}
\newcommand{\n}{\mathcal{N}}
\newcommand{\dfn}{{\triangleq}}
\newcommand{\bx}{\mathbf{x}}
\newcommand{\btx}{\tilde{\mathbf{x}}}
\newcommand{\by}{\mathbf{y}}
\newcommand{\bz}{\mathbf{z}}
\newcommand{\Rea}{\mathbb{R}}
\newcommand{\intgrs}{\mathbb{Z}}
\newcommand{\rationals}{\mathbb{Q}}
\newcommand{\diag}{\text{diag}}
\newcommand{\creg}{\mathcal{C}}
\begin{document}
\allowdisplaybreaks
\title{On the Degrees-of-Freedom of the $K$-User Gaussian Interference Channel}

\author{Raul Etkin and Erik Ordentlich\\ HP Laboratories\\ Palo Alto, CA}

\maketitle
\begin{abstract}
The degrees-of-freedom of a $K$-user Gaussian interference channel
(GIFC) has been defined to be the multiple of $ (1/2)\log_2 P $ at which the
maximum sum of achievable rates grows with increasing $ P $.  In this paper, we
establish that the degrees-of-freedom of three or more user, real,
scalar GIFCs, viewed as 
a function of the channel coefficients, is discontinuous at points
where all of the coefficients are non-zero rational numbers.  More
specifically, for all $ K > 2$, we find a class of $ K $-user GIFCs
that is dense in the 
GIFC parameter space for which $ K/2 $ degrees-of-freedom are exactly
achievable, and we show that the degrees-of-freedom for any GIFC with
non-zero rational coefficients is strictly smaller than $K/2$.  These results
are proved using new connections with number theory and additive
combinatorics. 
\end{abstract}

\section{Introduction}
The time-invariant, real, scalar $ K $-user Gaussian interference
channel (GIFC), as introduced in~\cite{Car78}, involves $ K $
transmitter-receiver pairs in 
which each transmitter attempts to
communicate a uniformly distributed, finite-valued message to its
corresponding receiver by 
sending a signal comprised of $ n $ real numbers.
Each receiver observes a component-wise linear combination of possibly
{\em all} of the transmitted signals plus additive memoryless Gaussian
noise, and seeks to decode, with probability close to one, the message
of its corresponding transmitter, in spite of the interfering signals
and noise.  The time averages of the squares of the transmitted signal values
are required to not exceed certain power constraints.  A $K$-tuple of
rates $ (R_1,\ldots, R_K) $ is said to be achievable for a GIFC if the
transmitters can increase the sizes of their message sets as $
2^{nR_i} $ with the signal length $ n $, and signal in such a way that
the power constraints are met and the receivers are able to correctly
decode their corresponding messages with probability converging to $ 1
$, as $ n $ grows to infinity.  The set of all achievable $K$-tuples of
rates is known as 
the capacity region of the GIFC.  Determining it, as a
function of the channel coefficients (specifying the 
linear combinations mentioned above), power constraints, and noise
variances, has been an open problem in information theory for over 30 years.  

A complete solution for even the two-user case, which has received the
 most attention to date, is still out of reach.  The best known
 coding scheme for two users is that presented in~\cite{HK81}.
 In some ranges of
channel coefficients, such as for strong interference, the
 capacity region is 
completely known for two users~\cite{Car78}.  Still for other ranges,
 the maximum achievable sum-of-rates is known~\cite{Sas04, Sha+07,
 MotKha08, AnnVee08}. For the general two-user case, the strongest
 known result is that of~\cite{Etk+07}, which determines the capacity region
 to within a $ 1/2 $ bit margin ($1$ bit for the complex case) using a
 carefully chosen version of the scheme of~\cite{HK81}, and a new
 genie-aided outerbound. 

The case of $ K > 2 $ users has, until very recently, received
less attention.  Much of the recent effort on $ K > 2 $,
beginning with~\cite{HN05} and continuing in e.g.,~\cite{CadJaf07,CJS07}
has focused on characterizing the growth of the capacity region in the
limit of increasing signal-to-noise ratio (SNR) corresponding, for
example, to fixing the noise variances and channel coefficients and
letting the power constraints tend to infinity.  Specific attention
has been directed at the growth of the maximum sum of achievable rates.
If there were no interference, the maximum achievable rate
corresponding to each transmitter-receiver pair would grow like $
(1/2)\log_2 P $ in the limit of increasing power, which follows from the
well known formula of $ (1/2)\log_2(1+P/N) $ for the capacity of a single
user additive Gaussian noise channel with power constraint $ P $ and
noise variance $ N $.  Thus, the maximum sum of achievable rates would grow
as $ (K/2)\log_2 P $ if there were no interference.  This motivates the
expectation that, in the general
case, the maximum sum of achievable rates would grow
as $ (d/2) \log_2 P $ for some constant $ d \leq K $, depending on the channel
coefficients, where $ d $ has been dubbed the degrees-of-freedom of
the underlying GIFC.  
Although determining $ d $ for a given GIFC is,
in principle, simpler than determining the capacity region,
it has turned out to be a difficult problem in its own right, for $ K
> 2 $.\footnote{The degrees-of-freedom is known to be 1 for all two-user GIFCs, unless there is no interference~\cite{HN05}.} 

A positive development in the study of the degrees-of-freedom of
GIFCs with more than two users has been the discovery of a new coding
technique known
as {\em interference alignment}, which involves carefully choosing the
transmitted signals so that the interfering signals ``align'' benignly at
each receiver~\cite{CadJaf07}.  Interference alignment has been shown,
under some 
conditions which we summarize below, to achieve nearly $ d = K/2 $
degrees-of-freedom, which is half of the degrees-of-freedom in the case
of no interference at all.
Interference alignment is not possible to implement for two users and its discovery thus had to wait until the focus shifted to more
users.  Another new phenomenon in network
information theory that has recently emerged as the number of users studied
was increased, is the technique of {\em indirect decoding}, which is
crucial for achieving the capacity region of certain three-user broadcast
channels~\cite{NaiElG08}.  Again, this technique is not relevant in the
two-user case, and could not have been discovered in the study thereof.

In this paper, we find a new information theoretic
phenomenon concerning interference channels that is not manifest in
the two-user case.  In particular, we find that the
degrees-of-freedom (and therefore the capacity region at high
signal-to-noise ratio) of real, scalar GIFCs with $ K > 2 $ users is very
sensitive to whether the channel coefficients determining the linear
combinations of signals at each receiver are rational or irrational numbers.  
Next, we formally explain our results and their significance in the
context of the growing literature on $ K > 2 $ user GIFCs. 

We shall use a matrix $ H $ to denote the direct and cross gains of a
time invariant, real, scalar $ K $-user (GIFC)~\cite{Car78} with the $
(i,j) $-th entry $ h_{i,j} $ specifying the channel gain 
from transmitter $ i $ to receiver $ j $.  Thus, the signal
observed by receiver $ j \in \{1,\ldots,K\} $ at time index $ t = 1,2,\ldots $
is given by  $ z_{j,t}+\sum_{i=1}^K x_{i,t}h_{i,j} $ where 
$ x_{i,t} $ is the real valued signal of transmitter $ i \in
\{1,\ldots,K\} $ at time $ t $ and $ z_{j,t} $
is additive Gaussian noise with variance $ \sigma^{2}_j $, independent
across time and users.  Fixing a block length $ n $, the transmitted
signals $\{x_{i,t}\} $ are required to satisfy the average power constraints $
\sum_{t=1}^n x_{i,t}^2 \leq nP_{i} $ for some collection of powers $
P_1,\ldots,P_K $.
For $ H \in \Rea^{K\times
  K} $, and $ \boldsymbol{\sigma}, \mathbf{P} \in \Rea_+^{K} $, we let
$ \creg(H,\boldsymbol{\sigma},\mathbf{P}) $ denote the capacity region
of a GIFC with gain matrix $ H $, receiver noise variances given
by the corresponding components of $ \boldsymbol{\sigma} $, and
average (per codeword) power constraints given by the components of $
\mathbf{P} $. 
Following~\cite{HN05}, we
define the degrees-of-freedom of $ H $ as 
\beq
DoF(H) = \limsup_{P \rightarrow \infty}\frac{\max_{\mathbf{R} \in
  \creg(H,\mathbf{1},P\mathbf{1})} \mathbf{1}^t \mathbf{R}}{(1/2)\log_2 P} ,
\label{eq:dofdef}
\eeq
where $ \mathbf{1} $ denotes the vector of all ones.
The degrees-of-freedom of a GIFC characterizes the behavior of the maximum
achievable sum rate as the SNR tends
to infinity, with the gain matrix fixed.

A fully connected GIFC is
one for which $ h_{i,j} \neq 0 $ for all $ i $ and $ j $.
It was shown in~\cite{HN05} that for fully connected $ H $, $ DoF(H) \leq K/2
$.  
If $ H $ is not fully connected, the
degrees-of-freedom can be as high as $ K $, such as when $ H $ is
the identity matrix where all cross gains are zero.  Little was known
about tightness of the $ K/2 $ bound for $ K > 2 $ until it was shown
in~\cite{CadJaf07} that  
for {\em vector} GIFCs and an appropriate generalization of $ DoF(\cdot) $ to include a  
normalization by the input/output vector dimension, the
degrees-of-freedom of ``almost all'' fully connected vector GIFCs
approaches $ K/2 $ 
when the vector dimension tends to infinity.\footnote{The $ K/2 $
  bound of~\cite{HN05} extends to the fully connected vector case as
  well.}  In addition, an example of a fully connected
two-dimensional vector GIFC achieving exactly $ K/2 $
degrees-of-freedom was also given in~\cite{CadJaf07}.   
The key tool introduced
in~\cite{CadJaf07} to establish these results is the technique of 
  interference alignment, which involves the transmitters signaling over
linear subspaces that, after component-wise scaling by the cross gains, {\em align} into interfering subspaces which are linearly independent with the directly received subspaces, allowing for many interference free dimensions over which to communicate.  For real, scalar GIFCs, it was  
shown in~\cite{CJS07}, using a different type of interference alignment,
that the degrees-of-freedom of certain fully connected GIFCs also
approaches $ K/2 $ when the cross gains
tend to zero.  Yet a different type of interference alignment is used
in~\cite{Bre+07} to find new achievable rates for a non-fully
connected GIFC in which interference occurs only at one receiver.
To our knowledge, the problem of determining or computing the degrees-of-freedom of general GIFCs is still open.  

As in~\cite{CJS07}, in this paper we consider only fully connected scalar,
real GIFCs and establish the following results on the degrees-of-freedom.

\begin{theorem}
\label{thm:irrational} 
If all diagonal components of a fully connected $ H $ are irrational algebraic numbers and all off-diagonal components are rational numbers then $ DoF(H) = K/2 $.
\end{theorem}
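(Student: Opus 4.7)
The upper bound $DoF(H) \leq K/2$ follows from the general fully-connected bound of \cite{HN05}, so the plan is to establish matching achievability. I would use scaled integer signaling: transmitter $i$ sends $x_i = \delta u_i$ with $u_i \in \{-Q,\ldots,Q\}$ and $\delta = \sqrt{P}/Q$, which meets the power constraint. Because every off-diagonal entry of $H$ is rational, the cross gains share a common denominator $q$, so at receiver $j$ the aggregate interference collapses onto a single one-dimensional lattice: $\sum_{i\neq j} h_{i,j}\delta u_i = (\delta/q) w_j$ for an integer $w_j$ with $|w_j| \leq c_1 Q$, where $c_1$ depends only on $H$. This is the interference-alignment effect that makes the scheme viable -- all $K-1$ interferers align onto a single effective dimension at each receiver. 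The noise-free received signal at receiver $j$ therefore takes the form $\delta(h_{j,j}u_j + w_j/q)$, and decoding $u_j$ reduces to resolving its contribution against the aligned interference lattice.

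The critical step is a minimum-distance argument for the set $\{h_{j,j} m + w/q\}$ over integers $m,w$ in the relevant range, restricted to $(m,w) \neq (0,0)$. For $m = 0$ the bound $|w/q| \geq 1/q$ is immediate; for $m \neq 0$ I would write $|h_{j,j} m + w/q| = |m| \cdot |h_{j,j} + w/(mq)|$ and invoke Roth's theorem: since $h_{j,j}$ is an irrational algebraic number, for every $\epsilon > 0$ there exists $C_\epsilon > 0$ such that $|h_{j,j} - p/q'| \geq C_\epsilon/|q'|^{2+\epsilon}$ for every rational $p/q'$. Taking $q' = mq$ and using $|m| \leq 2Q$ yields a minimum-distance lower bound of the form $C'_\epsilon Q^{-(1+\epsilon)}$. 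Choosing $Q = \lfloor P^{1/(4+2\epsilon)} \rfloor$ then makes the resulting received minimum distance $\delta \cdot C'_\epsilon / Q^{1+\epsilon}$ a positive constant independent of $P$.

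With a bounded-away-from-zero minimum distance secured, a standard random coding argument over the $n$-fold product $\{-Q,\ldots,Q\}^n$, combined with a nearest-neighbor union bound against unit-variance Gaussian noise, shows that each user can reliably transmit at rate $\log_2(2Q+1) - O(1) = (1+o(1)) \cdot \frac{1}{2+\epsilon} \cdot \frac{1}{2}\log_2 P$ bits per channel use as $P \to \infty$. Summing over the $K$ users gives $DoF(H) \geq K/(2+\epsilon)$ for every $\epsilon > 0$, and since $\epsilon$ is arbitrary this yields $DoF(H) \geq K/2$, completing the proof.

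The main technical obstacle is the Diophantine minimum-distance argument: Roth's theorem has to be invoked uniformly in $m,w$ over the entire coding range, and the collapse of the interference onto $(\delta/q)\intgrs$ must be exact, which is precisely why the rationality of every cross gain is essential -- even a single irrational off-diagonal entry would destroy the alignment onto a single one-dimensional lattice. A secondary concern is the reliability analysis for what is effectively a discrete-input AWGN channel at constant per-symbol SNR, but this is standard once the minimum distance is controlled.
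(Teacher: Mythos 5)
Your scheme is, in outline, the same as the paper's: align the interference by exploiting the rationality of the cross gains so that it lands on a fixed scaled copy of $\mathbb{Z}$ (the paper first uses its invariance lemma to make the cross gains integers, but your common-denominator argument is equivalent), then use Roth's theorem to show that the points $h_{j,j}m + w/q$, $m\neq 0$, are uniformly separated, and treat interference as noise at each receiver. Your Diophantine step is correct, and the uniformity in $m,w$ that you worry about is exactly what Roth provides. The gap is in the last step, the reliability/rate claim. With your choice $Q=\lfloor P^{1/(4+2\epsilon)}\rfloor$, the received minimum distance is only a \emph{constant} $C'_\epsilon$ independent of $P$, so the per-symbol error probability of the nearest-point estimator is a nonvanishing constant $P_e=2Q_{\mathcal{N}(0,1)}(C'_\epsilon/2)$. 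A ``nearest-neighbor union bound plus Fano/random coding'' argument then only yields $I(u_j;y_j)\ge (1-P_e)\log_2(2Q+1)-1$, i.e.\ a constant \emph{fraction} of the rate is lost, not a constant number of bits; and since $P_e$ depends on the Roth constant $C_\epsilon$ (which may degrade as $\epsilon\to 0$), you cannot recover $K/2$ by letting $\epsilon\to 0$. So the claimed rate $\log_2(2Q+1)-O(1)$ does not follow from the argument you sketch, and this is precisely the crux of the achievability proof rather than a routine detail.

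There are two ways to close it. The paper's route is to give up an $\epsilon$ in the exponent rather than in the prefactor: take lattice spacing $P^{1/4+\epsilon}$ inside $[-\sqrt P,\sqrt P]$ (i.e.\ $Q\approx P^{1/4-\epsilon}$, $\delta\approx P^{1/4+\epsilon}$), so the same Roth argument makes the minimum distance \emph{grow} like $P^{\epsilon}$; then $P_e\le 2\exp(-P^{2\epsilon}/8)\to 0$ and the weak Fano bound $H(u_j|y_j)\le 1+P_e\log_2|\mathcal{C}|\to 1$ already gives per-user rate $(1/4-\epsilon)\log_2 P - O(1)$, hence $DoF\ge K/2$ after $\epsilon\to0$, matched to the $K/2$ upper bound of the cited multiplexing-gain result. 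Alternatively, you can keep your scaling but you must prove $H(u_j\mid y_j)=O(1)$ directly, e.g.\ by noting that the received constellation is $C'_\epsilon$-separated, so the number of constellation points within distance $|z_j|$ of $y_j$ is at most $2|z_j|/C'_\epsilon+1$, and the rank of the true point among them is a positive-integer random variable with bounded mean, hence bounded entropy; this yields $I(u_j;y_j)\ge\log_2(2Q+1)-O(1)$ as you asserted. Either repair completes the proof; as written, the step is missing.
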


\begin{theorem}
\label{thm:rational}
For $ K > 2 $,
if all elements of a fully connected $ H $ are rational numbers then $ DoF(H) < K/2 $.
\end{theorem}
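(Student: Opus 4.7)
Since $DoF(H)$ is invariant under uniform nonzero scaling of $H$, I would first multiply $H$ by a common denominator of its entries and assume $H \in \mathbb{Z}^{K\times K}$. Unlike Theorem \ref{thm:irrational}, I now need a strict, polynomial-in-$P$ gap in the achievable sum rate: the statement $DoF(H)<K/2$ is equivalent to $\sum_i R_i \leq (K/4-\delta)\log_2 P$ for some $\delta=\delta(H)>0$ and all sufficiently large $P$, so the argument must produce a multiplicative, not merely additive, loss.

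The plan is to reduce the continuous Gaussian channel problem to a sumset problem on the integers and then invoke a Pl\"unnecke--Ruzsa type inequality. At high SNR the noise scale is $\Theta(1)$ while signal magnitudes are $O(\sqrt{P})$; a quantization/deterministic-channel argument should replace each transmitter's codebook, per channel use, by an integer set $A_i\subset\mathbb{Z}$ of size $|A_i|\lesssim \sqrt{P}$ while losing only $o(\log P)$ in sum rate. Because $H$ has integer entries, the noiseless output at receiver $j$ lies in the integer sumset $B_j = h_{1j}A_1+\cdots+h_{Kj}A_K$, whose cardinality is $O(\sqrt{P})$ by the power constraint. Decodability of user $j$'s message at receiver $j$ then forces a combinatorial condition linking $|B_j|$ to $|A_j|$ and to the sizes of the interfering codebooks.

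The combinatorial core of the argument is the following rigidity: for $K>2$ and fully connected integer $H$, one cannot simultaneously have all $K$ sumsets $B_j$ of size $O(\sqrt{P})$ and $\prod_i|A_i|$ of order $P^{K/4}$. Exploiting that $H$ is invertible over $\mathbb{Q}$, each $A_i$ is contained, up to a bounded-denominator blowup, in a fixed integer linear combination of the $B_j$'s, so iterated Pl\"unnecke--Ruzsa inequalities bound $\prod_i|A_i|$ strictly below $P^{K/4}$ by a factor polynomial in $P$. Translating this back through Fano's inequality and a standard converse yields $\sum_i R_i\le (K/4-\delta)\log_2 P+o(\log P)$, and hence $DoF(H)<K/2$.

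The hardest step will be the discretization reduction. A generic optimal coding scheme produces real-valued codewords that need not be product sets in the per-symbol sense, so extracting an approximately product-structured sub-codebook with per-symbol integer alphabets, without losing more than $o(\log P)$ in rate, requires care. A related subtlety is that receiver $j$ is only required to decode its own message rather than to decode jointly, so the sumset argument must be coupled with an input-distribution-level analysis (e.g., via typicality) to ensure that this weaker decoding requirement still forces the integer output alphabet at each receiver to carry essentially all of the additive resolution. Once this reduction is in place, a quantitative Pl\"unnecke--Ruzsa bound, applied to the fully populated integer linear forms determined by $H$, closes the argument.
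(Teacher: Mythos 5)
Your overall skeleton (scale $H$ to integers, pass to a noiseless integer-input deterministic channel with only bounded rate loss, then invoke additive combinatorics) matches the paper's first moves, but the combinatorial core of your argument has a genuine gap: the ``rigidity'' claim is false as stated. Take $K=3$ and $A_1=A_2=A_3=\{0,1,\ldots,\lfloor P^{1/4}\rfloor\}$. Then $\prod_i|A_i|\sim P^{3/4}=P^{K/4}$, yet every output set $B_j=h_{1j}A_1+\cdots+h_{Kj}A_K$ is a sum of dilates of intervals and has cardinality $O(P^{1/4})=O(\sqrt{P})$. So small output sumsets are perfectly compatible with $\prod_i|A_i|\sim P^{K/4}$; what rules out such configurations is not the sumset sizes but decodability, which you defer to an unspecified ``typicality'' analysis. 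Likewise, the containment of each $A_i$ in an integer combination of the $B_j$'s via $H^{-1}$, followed by Pl\"unnecke--Ruzsa, can only bound $|A_i|$ \emph{above} in terms of the $|B_j|$'s (e.g.\ $|A_i|\lesssim\mathrm{poly}(\sqrt{P})$); it cannot force $\prod_i|A_i|$ strictly below $P^{K/4}$, and indeed no statement of that shape can hold by the interval example.

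The paper's actual mechanism is different and is the missing idea. It first reduces to $K=3$ (the general case follows by averaging the $3$-user bound over all triples, as in the $K/2$ bound of Host-Madsen--Nosratinia) and, by dropping cross links and rescaling, to the triangular matrix $\tilde H$ with receiver outputs $x_1+x_2+x_3$, $p\,x_2+q\,x_3$, and $x_3$. Fano's inequality then telescopes the sum rate into $H(\bx_1^n+\bx_2^n+\bx_3^n)-H(\bx_2^n+\bx_3^n)+H(p\,\bx_2^n+q\,\bx_3^n)$, so achieving $3/2$ degrees-of-freedom would force the \emph{aligned} sum $\bx_2^n+\bx_3^n$ to have entropy $\approx (n/4)\log_2 P$ while the \emph{dilated} sum $p\,\bx_2^n+q\,\bx_3^n$ has entropy $\approx (n/2)\log_2 P$. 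The contradiction is a sums-of-dilates phenomenon: if $|A+B|\leq K|A|^{1/2}|B|^{1/2}$ then $|p\cdot A+q\cdot B|\leq K^{d(p,q)}|A|^{1/2}|B|^{1/2}$ (proved via Ruzsa covering and Pl\"unnecke--Ruzsa), and passing from small \emph{entropy} of a sum to a small \emph{sumset} requires the partial-sumset lemma together with the Balog--Szemer\'edi--Gowers theorem, since low entropy of $\bx_2^n+\bx_3^n$ does not by itself make the support of the sum small. None of this tension between $A+B$ and $p\cdot A+q\cdot B$ appears in your proposal. Incidentally, the step you flag as hardest (extracting per-symbol product-structured integer alphabets) is a non-issue in the paper: one works directly with blocklength-$n$ codebooks as subsets of the group $\mathbb{Z}^n$, to which all the additive-combinatorial tools apply verbatim.
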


The following corollary is then immediate from
Theorems~\ref{thm:irrational} and~\ref{thm:rational}, and the well
known fact that irrational algebraic numbers are dense in the real
numbers.

\begin{corollary}
\label{cor:discont}
For $ K > 2 $, the function $ DoF(H) $ is discontinuous at all fully connected $ H $ with rational components.
\end{corollary}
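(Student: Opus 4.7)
The plan is to derive the discontinuity directly by combining the two theorems with the density of the irrational algebraic numbers in $\Rea$. Fix any fully connected $H_0\in\Rea^{K\times K}$ all of whose entries are rational. By Theorem~\ref{thm:rational}, there exists $\epsilon>0$ such that $DoF(H_0) \le K/2-\epsilon$, since $K>2$. To exhibit a discontinuity at $H_0$, I would construct a sequence $H_n\to H_0$ (in any matrix norm on $\Rea^{K\times K}$) along which the degrees-of-freedom is identically $K/2$, and then note that $K/2 > DoF(H_0)$ implies $DoF$ is not continuous at $H_0$.

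The construction is as follows. For each $n$, leave every off-diagonal entry of $H_n$ equal to the corresponding rational off-diagonal entry of $H_0$, and for each diagonal position $(i,i)$ choose an irrational algebraic number $h^{(n)}_{i,i}$ within distance $1/n$ of the rational $(H_0)_{i,i}$. Such a choice is possible because the irrational algebraic numbers (e.g. numbers of the form $q+\sqrt{2}/m$ with $q\in\rationals$ and $m$ a large integer) are dense in $\Rea$. Since $H_0$ is fully connected, all entries of $H_0$, including its diagonal entries, are nonzero. For $n$ sufficiently large the perturbations of magnitude at most $1/n$ on the diagonal keep those entries nonzero, so $H_n$ remains fully connected; by construction its off-diagonal entries are rational and its diagonal entries are irrational algebraic, so Theorem~\ref{thm:irrational} applies and gives $DoF(H_n)=K/2$ for all such $n$.

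Combining these two facts, $H_n\to H_0$ with $\lim_n DoF(H_n)=K/2 > K/2-\epsilon \ge DoF(H_0)$, which establishes that $DoF$ is discontinuous at $H_0$. The argument is essentially an immediate corollary, and there is no real obstacle: the only point requiring any care is verifying that the approximating matrices $H_n$ still lie in the hypothesis class of Theorem~\ref{thm:irrational}, which is handled by keeping the off-diagonal entries fixed and using that full connectedness is preserved under sufficiently small perturbations of nonzero entries.
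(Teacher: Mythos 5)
Your argument is correct and is precisely the reasoning the paper has in mind when it calls the corollary ``immediate'' from Theorems~\ref{thm:irrational} and~\ref{thm:rational} together with the density of irrational algebraic numbers: perturb only the diagonal entries to nearby irrational algebraic values, keep the rational off-diagonal entries fixed, and compare $DoF(H_n)=K/2$ with $DoF(H_0)<K/2$. No gaps; the same approach, just written out in detail.
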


Theorem~\ref{thm:irrational} demonstrates the existence of fully
connected, real $ K $-user GIFCs with exactly $ K/2 $ degrees-of-freedom.  In contrast to the result of~\cite{CJS07},
Theorem~\ref{thm:irrational} is non-asymptotic (in $ H $).  The
underlying achievability scheme is based
on an interference alignment phenomenon that differs from the ones used
in~\cite{CadJaf07} and~\cite{CJS07}, and
relies on number theoretic lower bounds on the approximability of
irrational algebraic numbers by rationals.

Theorem~\ref{thm:rational} reveals a surprising limitation on $ 
DoF(H) $ when the components are non-zero rational numbers (up to
arbitrary pre-post multiplication by diagonal matrices - see
Lemma~\ref{lem:invariance} in Section~\ref{sec:irrational}).  In this
case, $ DoF(H) $ is strictly 
bounded away from $ K/2 $.  Previously known techniques for
finding outer-bounds to the capacity regions of GIFCs, such as
cooperative encoding and decoding~\cite{Sat77,Sat78}, genie
aided decoding~\cite{Etk+07,Car83,Kra04}, and multiple access
bounds~\cite{Car78, CadJaf08}
are not sensitive to the rationality of the channel
parameters and hence do not suffice to establish
Theorem~\ref{thm:rational}.  Instead, our proof of
this theorem is based on a new connection between GIFCs
with rational $ H $
and results from additive 
combinatorics~\cite{TaoVu06}, a branch of combinatorics that is
concerned with the cardinalities of sum sets, or sets obtained by
adding (assuming an underlying group 
structure) any element of a set $ A $ to any element of a set $ B $.

The remainder of the paper is organized as follows.  
The next section clarifies some notation and gives the formal
definition of the capacity region of a GIFC that will apply in this
paper. In
Section~\ref{sec:irrational}, we present the proof of
Theorem~\ref{thm:irrational}.  This is followed by the proof of
Theorem~\ref{thm:rational} in Section~\ref{sec:rational}, which
further consists of subsections collecting various intermediate
results.  Each of these sections is prefaced
with a high level outline of the respective proofs.
In Section~\ref{sec:example}, we determine
lower and upper bounds on $ DoF(H) $ for a simple three-user rational $ H $
by improving on the scheme of~\cite{CJS07}, and evaluating an upper bound
implicit in the proof of Theorem~\ref{thm:rational}.  We
conclude in Section~\ref{sec:conclusion} with some final observations
and directions for future work. 

\section{Notation and definitions}
We adopt the usual notation for the information theoretic quantities
of discrete and differential entropy (resp.\ $ H(X) $ and $ h(X) $),
and mutual information ($ I(X;Y) $), which shall all be measured in bits
(i.e.\ involve logarithms to the base two)~\cite{CT06}.  We shall use
the standard notation $ \lceil x \rceil $ and $ \lfloor x \rfloor $ to
respectively denote the smallest integer not smaller than $ x $ and the
greatest integer not larger than $ x $.  The cardinality of a set $
{\cal A}$ shall be denoted as $ |{\cal A}| $.

Next, we review the definition of
the capacity region of a $ K $-user GIFC with power constraints $
\mathbf{P}=(P_1,\ldots, P_K) $ and noise variances $ \boldsymbol{\sigma}=(\sigma^2_1,\ldots,\sigma^2_K)
$ that will apply in this paper.  Fixing a block length $ n $ and
a rate-tuple $ R_1,\ldots,R_K$, the random
message $W_i$ of the $i$-th transmitter is assumed to be
uniformly distributed in the set $ {\cal W}_i \stackrel{\triangle}{=}
\{1,\ldots,2^{\lceil{nR_i}\rceil}\} 
$.  The messages are further assumed to be independent from one user
to the next.
A coding scheme consists of $ K $ encoding functions
$ f_1,\ldots,f_K $ where $ f_i $ maps $ {\cal W}_i $ into the $ n
$-dimensional ball of radius $ \sqrt{nP_i} $ of
real vectors, the components of which specify the signal
value $ x_{i,t} $ that the $i$-th transmitter will send at each time
index.\footnote{For simplicity, in this paper, we formally adopt a
  per codeword average power constraint, as opposed to the more
  conventional {\em expected} average power constraint.  We note that
  the degrees-of-freedom of a GIFC can be shown to be the same under
  both types of power constraints.}  The set
$\{f_i(1),f_i(2),\ldots,f_i(2^{\lceil{nR_i}\rceil})\}$ constitutes the
codebook of transmitter $ i $.
There is also a corresponding set of $ K $ decoding functions $
g_1,\ldots,g_K $ where $ g_i $ maps $ n $-dimensional real vectors
into the message set $ {\cal W}_i $.  The function $ g_i $ is applied
by receiver $ i $ to the $n$ received signal values $ \mathbf{y}_i^n = y_{i,1},\ldots,y_{i,n} $,
which, as specified in the introduction, are formed as a component-wise linear
combination, according
the gain matrix $ H $,  of the transmitted signals and Gaussian noise.
\begin{definition}
\label{def:capreg}
The capacity region 
$ \creg(H,\boldsymbol{\sigma},\mathbf{P}) $ 
of the GIFC is defined as the set of
rate-tuples $R_1,\ldots,R_K$ for which there exists a sequence of
block length $ n $ message sets and power-constrained coding
schemes satisfying 
$ \lim_{n
  \rightarrow \infty} \max_{1\leq i \leq K} Pr(W_i \neq g_i(\mathbf{y}^{n})) 
= 0 $, where the probability of error is taken with respect to the distribution
induced by the random messages, the coding scheme, and the channel,
as specified above.
\end{definition}

The degrees-of-freedom $DoF(H)$ of a GIFC, as defined
in~(\ref{eq:dofdef}) above, will be assumed to be based on this formal
definition of $ \creg(H,\boldsymbol{\sigma},\mathbf{P}) $.

\section{Real, scalar GIFCs with exactly $ K/2 $ degrees-of-freedom} 
\label{sec:irrational}
In this section, we 
prove Theorem~\ref{thm:irrational} demonstrating the existence of
fully connected, real, scalar $ K $-user GIFCs with exactly $ K/2 $ degrees-of-freedom.  An outline of the proof is as follows.  First, we prove a simple
lemma (which will also be useful in the next section) showing that $
DoF(H) = DoF(D_tHD_r) $ for any diagonal matrices $ D_t $
and $ D_r $ with positive diagonal components.  This, in turn, implies
that we can transform any $ H $ satisfying 
the assumptions of Theorem~\ref{thm:irrational} to one with
irrational, algebraic numbers along the diagonal and integer values in
off-diagonal components, while preserving the degrees-of-freedom.  We
then focus on coding schemes for the new $ H $ in which each
transmitter is restricted to signaling over the
scalar lattice $ \{zP^{1/4+\epsilon}: z\in \intgrs\} $ intersected with
the interval $ [-P^{1/2},P^{1/2}] $.  The idea is that the integer
valued cross gains 
guarantee that the interfering signal values at each receiver will also be
confined to this scalar lattice (though may fall outside of the $
P^{1/2} $ interval), while the irrational direct gains
place the directly transmitted signal values on a scaled lattice that
``stands out'' from the interfering lattice.  Specifically, this
scaled lattice has the property that offsetting the interfering
lattice (equal to the 
original lattice) by each point in the scaled lattice results in 
disjoint sets.  A non-empty intersection would imply that the 
direct gain could be written as the ratio of two integers, which would
contradict its irrationality.  An even stronger property holds for
{\em algebraic} irrational direct gains: the distance between any
pair of points obtained by 
adding a point from the scaled lattice to a point from the 
interfering lattice actually grows with $ P $.  This is shown to follow
from a major result in number theory stating that for
any irrational algebraic number $ \alpha $ and any $ \gamma > 0 $, a
rational $ p/q $ approximation will have an error of at least $
\delta/q^{2+\gamma} $ for some $ \delta $ depending only on $ \alpha
$ and $ \gamma $.\footnote{For irrational algebraic numbers of degree two
(solutions to quadratic equations with integer coefficients), such as $
  \sqrt{2} $, the approximation bound
holds with $ \gamma = 0 $ and is known as 
Liouville's Theorem (established in 1844).  The validity of the bound for 
general algebraic numbers was a longstanding open problem in number
theory and was finally established in 1955 by K.~F.~Roth, for which he
was awarded the Fields Medal.}  The next step in the proof is to deal
with the noise by
coupling this inter-point distance growth with Fano's inequality to show
that the mutual information induced between each transmitter-receiver
pair by independent, uniform distributions on the original
power-constrained lattices, taking
interference into account, grows like $
(1/4-\epsilon) \log_2 P $, for arbitrarily small $ \epsilon $.  This,
in turn, implies the existence of a sequence of block codes (with
symbols from the original lattice) with sum rate approaching $
(K/4)\log_2 P $, and which are correctly decodeable, with high probability, by
treating interference as noise.

As mentioned, we begin with an
invariance property of $ DoF(H) $.
\begin{lemma}[Invariance property]
\label{lem:invariance}
For any matrix $ H $ and diagonal matrices $ D_t $ and $ D_r $ with
positive diagonal components $ DoF(D_tHD_r) = DoF(H)
$.\footnote{The matrices $D_t $ and $ D_r $ need only have non-zero
  diagonal components for the result to hold.  We assume positivity
  for simplicity, as this is all we shall require in this paper.}
\end{lemma}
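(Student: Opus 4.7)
The plan is to absorb the diagonal scalings $D_t = \diag(a_1,\ldots,a_K)$ and $D_r = \diag(b_1,\ldots,b_K)$ into per-user power constraints and per-receiver noise variances of an equivalent $H$-channel problem, and then appeal to the insensitivity of $DoF(\cdot)$ to constant multiplicative rescalings of power and noise.

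First I would establish a coding equivalence. Given any block length $n$ scheme $(\tilde f_i, \tilde g_j)$ for $\tilde H = D_t H D_r$ with unit noise and powers $P\mathbf{1}$, the rescaled scheme $f_i = a_i \tilde f_i$, $g_j(\mathbf{y}) = \tilde g_j(b_j \mathbf{y})$ is a valid scheme for $H$ with per-user powers $(a_i^2 P)_i$ and per-receiver noise variances $(1/b_j^2)_j$: the power budget is enlarged exactly by $a_i^2$, and the vector $b_j \mathbf{y}_j$ obtained at receiver $j$ in the $H$-channel with noise variance $1/b_j^2$ has the same joint distribution with $(W_1,\ldots,W_K)$ as receiver $j$'s observation in the $\tilde H$-channel with unit noise, so error probabilities agree. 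Running the same construction with $D_t^{-1}, D_r^{-1}$ gives the reverse direction, and hence
\[
\creg(\tilde H, \mathbf{1}, P\mathbf{1}) = \creg\!\left(H,\, (1/b_j^2)_j,\, (a_i^2 P)_i\right).
\]

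Next I would eliminate the user-dependence using monotonicity and a scalar change of variable. The capacity region is monotone increasing in the power vector and decreasing in the noise-variance vector, so with $a = \min_i a_i$, $A = \max_i a_i$, $b = \min_j b_j$, $B = \max_j b_j$ the region on the right is sandwiched between $\creg(H, (1/b^2)\mathbf{1}, a^2 P\mathbf{1})$ and $\creg(H, (1/B^2)\mathbf{1}, A^2 P\mathbf{1})$. The substitution $\mathbf{x}_i \leftarrow \sqrt{d}\,\tilde{\mathbf{x}}_i$ yields the identity $\creg(H, d\mathbf{1}, cP\mathbf{1}) = \creg(H, \mathbf{1}, (c/d)P\mathbf{1})$ for any positive constants $c,d$, collapsing both bracketing regions to ones of the form $\creg(H, \mathbf{1}, c'P\mathbf{1})$ with constants $c' > 0$ depending only on $a, A, b, B$.

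Finally, letting $S(P)$ denote the maximum sum-rate in $\creg(H,\mathbf{1},P\mathbf{1})$, the identity $\log_2(c P) = \log_2 P + O(1)$ gives
\[
\limsup_{P \to \infty} \frac{S(c P)}{(1/2)\log_2 P} = DoF(H)
\]
for every fixed $c > 0$, and squeezing between the two brackets forces $DoF(\tilde H) = DoF(H)$. The main care needed is in the bookkeeping of the chain of rescalings (the $a_i$ migrate into the power constraints, the $b_j$ into the noise variances, and both subsequently collapse into a single multiplicative constant on $P$); no step requires anything beyond elementary monotonicity of the capacity region and the insensitivity of the $DoF$ ratio to $O(1)$ additive shifts of $\log_2 P$.
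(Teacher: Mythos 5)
Your proposal is correct and follows essentially the same route as the paper's proof: absorb $D_t$ into per-user power constraints and $D_r$ into per-receiver noise variances (the paper's identity $\creg(D_tHD_r,\mb{1},P\mb{1})=\creg(H,\mb{1}^tD_r^{-2},P\mb{1}^tD_t^2)$), sandwich via monotonicity using the extreme diagonal entries, renormalize equal noise variances into the power constraint, and conclude from the insensitivity of the $DoF$ ratio to constant multiplicative rescalings of $P$. No substantive difference from the paper's argument.
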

\begin{proof}
Let $D_t=\diag(d_{t1},\ldots,d_{tK})$ $D_r=\diag(d_{r1},\ldots,d_{rK})$. In the matrix multiplication $D_tHD_r$, $d_{ti}$ scales the channel gains from transmitter $i$ to the different receivers, while $d_{rj}$ scales the channel gains from all transmitters to receiver $j$. By scaling the input signals and noise variances instead of the channel gains, we can write
\beq
\creg(D_tHD_r, \mb{1}, P\mb{1}) = \creg(H, \mb{1}^t D_r^{-2}, P
\mb{1}^t D_t^2). 
\label{eq:eq1}
\eeq

Let $\check {d}_{t} = \min_{1\le i \le K} d_{ti} $, $\hat{d}_{t} =
\max_{1\le i \le K} d_{ti} $, $\check{d}_{r} = \min_{1\le i \le K}
d_{ri}$, $\hat{d}_{r} = \max_{1\le i \le K} d_{ri}$. Since increasing
the power constraints and reducing the noise variances cannot reduce
the capacity region of the GIFC we have 
\beq
\creg(H, (1/\check{d}_r^2) \mb{1}, P \check{d}_t^2 \mb{1}) \subseteq \creg(H, \mb{1}^t D_r^{-2}, P \mb{1}^tD_t^2) \subseteq \creg(H, (1/\hat{d}_r^2) \mb{1}, P \hat{d}_t^2 \mb{1}).
\label{eq:norm1}
\eeq
Furthermore, once all the noise variances are equal, they can be normalized to 1 by scaling the power constraints, leading to
\beqa
\creg(H, (1/\check{d}_r^2) \mb{1}, P \check{d}_t^2 \mb{1}) &=& \creg(H, \mb{1}, P \check{d}_r^2 \check{d}_t^2 \mb{1})\nonumber\\
\creg(H, (1/\hat{d}_r^2) \mb{1}, P \hat{d}_t^2\mb{1}) &=& \creg(H, \mb{1}, P \hat{d}_r^2 \hat{d}_t^2 \mb{1}).
\label{eq:norm2}
\eeqa
Using (\ref{eq:eq1}), (\ref{eq:norm1}) and (\ref{eq:norm2}) we can write

\beq
\frac{\max_{\mathbf{R} \in \creg(H, \mb{1}, P \check{d}_r^2 \check{d}_t^2 \mb{1})} \mathbf{1}^t \mathbf{R}}{\frac{1}{2} \log_2 P} \le \frac{\max_{\mathbf{R} \in \creg(D_tHD_R, \mb{1}, P\mb{1})} \mathbf{1}^t \mathbf{R}}{\frac{1}{2} \log_2 P} \le \frac{\max_{\mathbf{R} \in \creg(H, \mb{1}, P \hat{d}_r^2 \hat{d}_{t}^2\mb{1})} \mathbf{1}^t \mathbf{R}}{\frac{1}{2} \log_2 P}\nonumber
\eeq
which can be rewritten as
\begin{multline}
\frac{\frac{1}{2}\log_2(P
  \check{d}^2_r \check{d}^2_t) }
{\frac{1}{2} \log_2 P } 
\frac{\max_{\mathbf{R} \in \creg(H, \mb{1}, P\check{d}^2_r\check{t}^2_t
    \mb{1})} \mathbf{1}^t \mathbf{R}}{\frac{1}{2} \log_2(P
  \check{d}^2_r \check{d}^2_t) } \\ 
\le \frac{\max_{\mathbf{R} \in
    \creg(D_tHD_R, \mb{1}, P)} \mathbf{1}^t
  \mathbf{R}}{\frac{1}{2} \log_2 P} \le 
\frac{\frac{1}{2}\log_2(P
  \hat{d}^2_r \hat{d}^2_t) }
{\frac{1}{2} \log_2 P } 
\frac{\max_{\mathbf{R} \in
    \creg(H, \mb{1}, P\hat{d}^2_r\hat{d}^2_t \mb{1})} \mathbf{1}^t
  \mathbf{R}}{\frac{1}{2} \log_2 (P\hat{d}^2_r \hat{d}^2_t)}.
\nonumber 
\end{multline}
Taking $ \limsup $ of all three terms as $ P \rightarrow \infty $
implies $ DoF(H) \leq DoF(D_t H D_r) \leq DoF(H) $.
\end{proof}

\begin{proof}(of Theorem~\ref{thm:irrational})
By Lemma~\ref{lem:invariance} we can scale $ H $ (by post multiplying by an integer valued $ D_r $) so that all off-diagonal elements are integers and all diagonal elements remain irrational algebraic.  In addition, from (\ref{eq:dofdef}) we only need to consider channels where all the inputs have the same power constraint $P$ and all the noise processes have variance 1.

For any $\epsilon > 0 $, we will present a communication scheme that
achieves $\mathbf{1}^t\mathbf{R} = (K/4 -  K \epsilon)\log_2 P -
o(\log_2 P)$, implying that $ DoF(H) \geq K/2 $. Consider the scalar lattice
\[
\Lambda_{P,\epsilon} = \{x: x=P^{1/4+\epsilon} z, z\in \intgrs\}
\]
and let $\mathcal{C}_{P,\epsilon}=\Lambda_{P,\epsilon} \cap [-\sqrt{P}, \sqrt{P}]$. Note that
\beq
|\mathcal{C}_{P,\epsilon}|= 2 \left\lfloor \frac{\sqrt{P}}{P^{1/4+\epsilon}}\right\rfloor+1 \le 2 P^{1/4-\epsilon}+1.
\label{eq:cardC}
\eeq
The users communicate using codebooks of block length $n$, obtained by
uniform i.i.d. sampling $\mathcal{C}_{P,\epsilon}$. Note
that due to the truncation of the lattice to the interval $[-\sqrt{P},
  \sqrt{P}]$, the symbol power ($x_{i,t}^2$) never exceeds $P$ at any
time index, and 
hence the average codeword power does not exceed $P$. Each receiver
decodes the signal of its transmitter, treating the interfering
signals as i.i.d. noise. With this scheme, as $n\to \infty$ we can
achieve: 
\[
R_i = I(X_i;Y_i)= H(X_{i})- H(X_{i}|Y_{i})\text{ , }i=1,\ldots,K,
\]
where $X_{i} \sim \text{Uniform}(\mathcal{C}_{P,\epsilon})$,
$Y_i=\sum_{j=1}^K h_{ji} X_j+Z_i$, and $Z_i\sim \n(0,1)$,
$i=1,\ldots,K$. 

First we note that $H(X_{i})=\log_2|\mathcal{C}_{P,\epsilon}| \approx (\frac{1}{4}-\epsilon) \log_2 P+\log_2 2 $. We will show that 
\[
\limsup_{P\to \infty}H(X_{i}|Y_{i})\le 1 \text{ , } i=1,\ldots,K,
\]
and as a result, $R_i = (\frac{1}{4}-\epsilon) \log_2 P -o(\log_2 P)$
can be achieved. It would then follow that $DoF(H)\ge K/2$, and, from
the upper bound of~\cite{HN05}, that $ DoF(H) = K/2 $.

We will use the following lemma to upper bound $H(X_i|Y_i)$ for $i=1,\ldots,K$. 

\begin{lemma}
\label{lemma:latticesum}
Let $\Sigma_{P,\epsilon}=\{\alpha x+s: x\in
\mc{C}_{P,\epsilon},s \in \Lambda_{P,\epsilon} \}$, with $\alpha$
being any real, irrational, and algebraic number. For any $y \in
\Sigma_{P,\epsilon}$ there exists a unique pair $(x,s) \in
\mathcal{C}_{P,\epsilon}\times \Lambda_{P,\epsilon}$ such that
$y=\alpha x+s$. In addition, if $y_1,y_2 \in \Sigma_{P,\epsilon}$,
$y_1\ne y_2$, then $|y_1-y_2| > P^\epsilon$ for any given $\epsilon>0$
and large enough $P$. 
\end{lemma}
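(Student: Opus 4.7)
My plan is to handle the two assertions of the lemma in sequence, first the uniqueness of the decomposition $y=\alpha x+s$ and then the growth of the minimum gap. Both will follow from the irrationality and algebraicity of $\alpha$ combined with the lattice structure of $\mathcal{C}_{P,\epsilon}$ and $\Lambda_{P,\epsilon}$, with the gap bound relying on Roth's theorem (invoked in the outline preceding the lemma statement).

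For uniqueness, I would suppose $\alpha x_1+s_1=\alpha x_2+s_2$ with $(x_i,s_i)\in\mathcal{C}_{P,\epsilon}\times\Lambda_{P,\epsilon}$. Writing $x_i=P^{1/4+\epsilon}z_i$ and $s_i=P^{1/4+\epsilon}z_i'$ with $z_i,z_i'\in\intgrs$, the equality becomes $\alpha(z_1-z_2)=z_2'-z_1'$. If $z_1\neq z_2$, then $\alpha=(z_2'-z_1')/(z_1-z_2)\in\rationals$, contradicting the irrationality of $\alpha$. Hence $x_1=x_2$ and then $s_1=s_2$.

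For the separation bound, fix distinct $y_1,y_2\in\Sigma_{P,\epsilon}$ with representations $y_i=\alpha x_i+s_i$, and write $x_1-x_2=P^{1/4+\epsilon}n$ and $s_1-s_2=P^{1/4+\epsilon}m$ with $n,m\in\intgrs$ so that
\[
y_1-y_2 = P^{1/4+\epsilon}(\alpha n+m).
\]
If $n=0$ the uniqueness part forces $m\neq 0$, giving $|y_1-y_2|\geq P^{1/4+\epsilon}$, which dominates $P^{\epsilon}$ for large $P$. If $n\neq 0$, then $\alpha n+m=n(\alpha-(-m/n))$, and Roth's theorem (as recalled in the outline) supplies constants $\delta=\delta(\alpha,\gamma)>0$ such that $|\alpha-(-m/n)|\geq \delta/|n|^{2+\gamma}$ for every $\gamma>0$. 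Since $\mathcal{C}_{P,\epsilon}\subseteq[-\sqrt{P},\sqrt{P}]$, we have $|n|\leq 2P^{1/4-\epsilon}$, and therefore
\[
|y_1-y_2| \;\geq\; P^{1/4+\epsilon}\cdot\frac{\delta}{|n|^{1+\gamma}} \;\geq\; \frac{\delta}{2^{1+\gamma}}\; P^{\,1/4+\epsilon-(1/4-\epsilon)(1+\gamma)} \;=\; \frac{\delta}{2^{1+\gamma}}\; P^{\,2\epsilon-\gamma/4+\epsilon\gamma}.
\]

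The remaining task is to choose $\gamma$ so that the exponent strictly exceeds $\epsilon$. Solving $2\epsilon-\gamma/4+\epsilon\gamma>\epsilon$ gives $\gamma<4\epsilon/(1-4\epsilon)$, which admits a positive solution whenever $\epsilon<1/4$ (and we may assume $\epsilon$ is this small without loss of generality, since the lemma is only invoked for $\epsilon$ arbitrarily small). With such a $\gamma$ fixed, the constant $\delta/2^{1+\gamma}$ is independent of $P$, so for all sufficiently large $P$ the right-hand side exceeds $P^{\epsilon}$, completing the proof. The delicate step is precisely this exponent accounting: a naive use of $|\alpha-p/q|\geq \delta/q^2$ (Liouville style, valid only for degree-two algebraics) would just barely fail, so Roth's theorem with $\gamma$ chosen strictly smaller than $4\epsilon/(1-4\epsilon)$ is essential.
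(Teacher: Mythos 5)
Your proof is correct and follows essentially the same route as the paper's: the uniqueness argument via the rationality contradiction is identical, and the gap bound rests on the same combination of Roth's theorem with the bound $|n|\le 2P^{1/4-\epsilon}$ coming from the truncation $\mathcal{C}_{P,\epsilon}\subseteq[-\sqrt{P},\sqrt{P}]$, leading to the same constraint $\gamma<4\epsilon/(1-4\epsilon)=\epsilon/(1/4-\epsilon)$. Your only streamlining is merging the paper's separate cases $\hat{s}=s$ and ($\hat{x}\ne x$, $\hat{s}\ne s$) into the single case $n\ne 0$ by applying Roth with numerator $-m$ even when $m=0$; this is valid since Roth's bound holds for all integer pairs $(p,q)$ with $q>0$, including $p=0$.

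One remark in your last paragraph is off. Plugging $\gamma=0$ into your exponent gives $2\epsilon>\epsilon$, so a bound of the form $|\alpha-p/q|\ge\delta/q^{2}$ would in fact \emph{succeed}, not ``just barely fail.'' The actual reason you cannot use $\gamma=0$ is that no such $\delta$ exists for general algebraic irrationals: the exponent-$2$ bound (badly-approximable property) is only known for quadratic irrationals, and for higher degree Liouville's theorem only yields exponent $d$, which is too weak for small $\epsilon$ once $d$ is large. Roth's theorem is essential because it recovers exponent $2+\gamma$ for every degree, not because the exponent $2$ would be numerically insufficient. This does not affect the validity of your proof, only the motivation you offer for it.
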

\begin{proof}
Let $y=\alpha x+s$ with $x\in \mathcal{C}_{P,\epsilon}$, $s \in
\Lambda_{P,\epsilon}$.  To get a contradiction, assume that there
exists $(\tilde{x}, \tilde{s}) \in \mathcal{C}_{P,\epsilon} \times
\Lambda_{P,\epsilon}$ with $(\tilde{x}, \tilde{s})\ne (x,s)$ such that
$\alpha \tilde{x}+\tilde{s}=y$. Without loss of generality we can
assume $\tilde{x} \ge x$. Since $\alpha \ne 0$ we have $\tilde{s} \ne
s$ and $\tilde{x} > x$. In addition, since by assumption $\alpha x+ s
= \alpha \tilde{x}+\tilde{s}$, we have  
\[
\alpha = \frac{s - \tilde{s}}{\tilde{x}-x} = \frac{(z_s - z_{\tilde{s}})P^{1/4+\epsilon}}{(z_{\tilde{x}}-z_x)P^{1/4+\epsilon}} = \frac{z_s - z_{\tilde{s}}}{z_{\tilde{x}}-z_x} \in \rationals
\]
where $z_x, z_{\tilde{x}}, z_s, z_{\tilde{s}} \in \intgrs$, which contradicts the assumption of irrational $\alpha$.

To prove the second part of the lemma, let $\hat{y}= \alpha \hat{x}+\hat{s}$, where $\hat{x}\in\mc{C}_{P,\epsilon}$, $\hat{s} \in \Lambda_{P,\epsilon}$, and $\hat{y} \in \Sigma_{P,\epsilon}$, with $\hat{y} \ne y$.  If $\hat{s} = s$ then 
\[
|\hat{y}-y|= \alpha |\hat{x}-x|=\alpha |z_{\hat{x}}-z_x|
 P^{1/4+\epsilon} > P^{\epsilon},
\]
where $z_x, z_{\hat{x}} \in \intgrs$, as long as $P$ is sufficiently large. Similarly, if $\hat{x} = x$ and $P$ is large enough we have
\[
|\hat{y}-y|= |\hat{s}-s|=|z_{\hat{s}}-z_s| P^{1/4+\epsilon} > P^{\epsilon},
\]
where $z_s, z_{\hat{s}} \in \intgrs$. So it remains to consider the case $\hat{x} \ne x$ and $\hat{s} \ne s$. Without loss of generality we can assume $\hat{x} > x$. To get a contradiction, we assume that $|\hat{y}-y| \le P^\epsilon$, and write:
\beqa
|\hat{y}-y| &\le& P^\epsilon \nonumber\\
|\alpha \hat{x} + \hat{s} - \alpha x - s| &\le& P^\epsilon \nonumber\\
|\alpha z_{\hat{x}}+z_{\hat{s}} - \alpha z_x - z_s| &\le& \frac{P^\epsilon}{P^{1/4+\epsilon}} \nonumber\\
\left|\alpha - \frac{z_s-z_{\hat{s}}}{z_{\hat{x}}-z_x}\right| &\le&
\frac{P^{-1/4}}{z_{\hat{x}}-z_x},
\label{eq:error}
\eeqa
where $z_x, z_{\tilde{x}}, z_s, z_{\tilde{s}} \in \intgrs$.

On the other hand there are bounds on how well an irrational algebraic number can be approximated with a rational number. The most refined of those bounds, due to Roth, 1955, states that for any irrational algebraic $\alpha$, and any $\gamma >0$, there exists $\delta > 0$ such that
\beq
\left|\alpha -\frac{p}{q}\right| > \frac{\delta}{q^{2+\gamma}}
\label{eq:roth}
\eeq
for all $p,q \in \intgrs$, $q > 0$ \cite{Sta84}.

Combining (\ref{eq:error}) and (\ref{eq:roth}) we have
\[
\frac{\delta}{(z_{\hat{x}}-z_x)^{2+\gamma}} < \left|\alpha - \frac{z_s-z_{\hat{s}}}{z_{\hat{x}}-z_x}\right| \le \frac{P^{-1/4}}{z_{\hat{x}}-z_x}
\]
so that
\beq
0 < \delta <  P^{-1/4} (z_{\hat{x}}-z_x)^{1+\gamma} \stackrel{(a)}{\le} P^{-1/4} \left(2 P^{1/4-\epsilon}+1\right)^{1+\gamma} = 2^{1+\gamma}P^{\gamma/4-\epsilon(1+\gamma)}+o(1)
\label{eq:delta}
\eeq
where we used (\ref{eq:cardC}) in step (a). But the right hand side of (\ref{eq:delta}) goes to 0 as $P\to \infty$ whenever $\epsilon \ge 1/4$ or $\gamma < \epsilon/(1/4-\epsilon)$. Since we can choose any $\gamma > 0$, we can obtain a contradiction in (\ref{eq:delta}) for any $\epsilon>0$, for large enough $P$.
\end{proof}

We will use Lemma \ref{lemma:latticesum} to build an estimator that can identify $X_i$ in $Y_i$ with high probability.

Let $S_i\dfn \sum_{j\ne i} h_{ji} X_j$, and note that since $h_{ji}\in
\intgrs$ for $j \ne i$ we have that $S_i \in \Lambda_{P,\epsilon}$. In
addition, let $\Sigma_{P,\epsilon,i} = \{h_{ii}x + y:
x \in \mc{C}_{P,\epsilon},y \in \Lambda_{P,\epsilon}\}$, and let
$v_i:\mathcal{C}_{P,\epsilon}\times \Lambda_{P,\epsilon} \to
\Sigma_{P,\epsilon,i}$ be defined as $v_i(x,s) = h_{ii}x + s$. Using
Lemma \ref{lemma:latticesum} and the fact that $h_{ii}$ is real,
algebraic and irrational we have that $v_i$ is invertible, i.e. there
exists $v_i^{-1}:\Sigma_{P,\epsilon,i} \to \mathcal{C}_{P,\epsilon}
\times \Lambda_{P,\epsilon}$ such that $v_i^{-1}(v_i(x,s))=(x,s)$ for
any $(x,s) \in \mathcal{C}_{P,\epsilon} \times \Lambda_{P,\epsilon}$.  

Let
$u:\mathbb{R}^2\to \mathbb{R}$ be defined as $u(x,s)=x$, and  
let $\hat{X}_{i}=u(v_i^{-1}(\arg
  \min_{x\in \Sigma_{P,\epsilon,i}} |x-Y_{i}|))$. We have $\hat{X}_{i}
  \ne X_{i}$ whenever $Y_{i}$ is closer to some other point in
  $\Sigma_{P,\epsilon,i}$ than it is to $h_{ii} X_{i}+ S_i$. From
  Lemma \ref{lemma:latticesum}, this can only occur if $|Z_{i}| \ge
  P^{\epsilon}/2$ for large enough $P$. It follows that 
\[
Pr(\hat{X}_{i} \ne X_{i}) \le Pr \left(|Z_{i}| \ge
\frac{P^{\epsilon}}{2}\right) = 2
Q_{\mathcal{N}(0,1)}\left(\frac{P^{\epsilon}}{2} \right) \le 2
\exp\left(-\frac{P^{2\epsilon}}{8}\right), 
\]
where $ Q_{\mathcal{N}(0,1)}(x) $ is the probability that a Gaussian
random variable with zero mean and variance one exceeds $ x $.
Using the data processing and Fano's inequalities we obtain
\beqa
H(X_{i}|Y_{i}) &\le& H(X_{i}|\hat{X}_{i}) \nonumber\\
&\le& 1+ Pr(\hat{X}_{i}\ne X_{i}) \log(|\mathcal{C}_{P,\epsilon}|) \nonumber\\
&\le& 1+2 \exp\left(-\frac{P^{2\epsilon}}{8}\right) \left[\left(\frac{1}{4}-\epsilon\right) \log_2 P +\log_2 2+o(1)\right]
\eeqa
which goes to $1$ as $P\to \infty$.
\end{proof}

\section{Degrees-of-freedom for rational $ H $}
\label{sec:rational}
In this section, we give the proof of Theorem~\ref{thm:rational}, 
establishing that the degrees-of-freedom of any fully connected, real, scalar
GIFCs is bounded strictly below $ K/2 $, for $ K > 2$.  As
in the previous section, we begin with a sketch of the proof.

Most
of the work in the proof is to establish the theorem for $ K = 3 $
users.  The theorem for $ K > 3 $ will then follow from an extension of
the averaging argument of~\cite{HN05}, used therein to obtain the $ K/2
$ degrees-of-freedom upper bound from a bound of $ 1 $ on the
degrees-of-freedom for $ K=2 $.  In this case, the $ K = 3 $ bound is
averaged over all three-tuples of users (transmitters and corresponding
receivers), as opposed to pairs of users in~\cite{HN05}.  

Given a  $ K = 3 $ user GIFC with fully connected, rational $ H $, using
Lemma~\ref{lem:invariance} (invariance property) and eliminating cross
links, we can upper bound $ DoF(H) $ by $ DoF(\tilde{H}) $ where
\[
\tilde{H} = \big[\tilde{h}_{ij}\big]=\left[
\begin{array}{ccc} 
1 & 0 & 0 \\ 
1 & p & 0 \\ 
1 & q & 1 
\end{array} \right],
\]
where $ p $ and $ q $ are integers (see
Figure~\ref{fig:3x3channel} in Subsection~\ref{sec:mainlemma}).  
The main step in the proof of the overall theorem
is establishing that $ DoF(\tilde{H}) < 3/2 $, which is 
formally carried out in Lemma~\ref{lem:mainlemma} below, and proceeds
as follows.  First, it is shown (Lemma~\ref{lemma:deterministic})
that a deterministic channel obtained
by eliminating all noise sources
and restricting the power-constrained codewords to have integer
valued symbols results in at most a power-constraint-independent loss in the
achievable sum rate.  Therefore, the degrees-of-freedom (according to
the obvious generalization) of this deterministic interference channel
(IFC) is no
smaller than $ DoF(\tilde{H}) $.  Next, it is shown using a 
Fano's inequality based argument that if the degrees-of-freedom of
the deterministic IFC 
is at least $ 3/2 $ there would exist finite sets of $ n
$-dimensional integer 
valued vectors $ \mc{X}_2 $ and $  \mc{X}_3
$ such that the corresponding independent random variables 
$\bx^n_2 $ and $ \bx^n_3 $, uniformly distributed on these sets, induce
discrete entropies satisfying $ H(\bx^n_2) \approx n(1/4)\log_2 P $, $
H(\bx^n_3) \approx n(1/4)\log_2 P $, $ H(\bx^n_2 + \bx^n_3) \approx
n((1/4)+\epsilon)\log_2 P $, 
and  $ H(p\cdot \bx^n_2 + q\cdot \bx^n_3) \approx n((1/2) - \epsilon)\log_2
P $, for the integers $ p $ and $ q $ defining the channel and $
\epsilon $ arbitrarily small.  These entropy relations suggest that
the cardinality of the support of $p\cdot \bx^n_2 + q\cdot \bx^n_3 $ is
much larger than that of $ \bx^n_2 + \bx^n_3 $.  However, tools from additive
combinatorics (through Lemma~\ref{lem:setsum})
can be used to show that this is impossible for integer
valued $ p $ and $ q $, leading to a 
 contradiction, and thereby implying
that the deterministic channel must have degrees-of-freedom strictly
smaller than $ 3/2 $.   Unfortunately, the link between the
entropy and the cardinality of the support of a sum of independent, uniformly
distributed random
variables is sufficiently weak that a somewhat more involved argument
(incorporating Lemma~\ref{lemma:exG} and Theorem~\ref{thm:balog})
is ultimately required to reach the above conclusions.  The overall
intuition behind the proof, however, is as outlined.

The rest of the section is organized as follows.  
Subsections~\ref{sec:infotheory} and~\ref{sec:addcomb} respectively
collect supporting results of an information theoretic nature and results from
additive combinatorics.  The proof of the main lemma on the
$ K = 3 $ user IFC is presented in Subsection~\ref{sec:mainlemma}.
Finally, the extension to $ K > 3 $ is presented in
Subsection~\ref{sec:genK}.  Throughout, the capacity region of a $ K
$-user GIFC will be taken as in Definition~\ref{def:capreg}.

\subsection{Supporting information theoretic results}
\label{sec:infotheory}
\begin{lemma}
The capacity region
of a $K$-user memoryless IFC,\footnote{Here we are
  considering more general IFCs 
  than the Gaussian case.  Definition~\ref{def:capreg} still applies,
  but with the appropriate conditional probability distribution
  of channel outputs given channel inputs.}
 where the codebook of user $i$ is subject to an average power constraint $P_i$, is given by the limiting expression:
\beq
{\cal C}_{IFC}=\bigcup_{n=1}^\infty \bigcup_{\stackrel{P_{\bx_1^n
      \ldots \bx_K^n}=P_{\bx_1^n}\cdot\cdot\cdot
    P_{\bx_K^n}}{Pr(\|\bx_i^n\|^2_2\le n P_i)=1 \text{ , } i=1,\ldots,
    K}} \left\{\mb{R}\in \Rea_+^K: R_i \le \frac{1}{n} I(\bx_i^n;
\by_i^n), i=1,\ldots, K \right\}  
\label{eq:limcap}
\eeq
\end{lemma}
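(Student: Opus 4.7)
The plan is to prove the claimed equality by establishing matching inner and outer bounds, as is standard for multi-letter characterizations of memoryless networks. Write $\mathcal{R}^\star$ for the right-hand side of~(\ref{eq:limcap}); I will show ${\cal C}_{IFC} \subseteq \mathcal{R}^\star$ (converse) and $\mathcal{R}^\star \subseteq {\cal C}_{IFC}$ (achievability).

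For the converse, fix an achievable rate tuple $(R_1,\ldots,R_K)$. By Definition~\ref{def:capreg} there exists a sequence of block length $n$ coding schemes with maximum error probability $P_e^{(n)} \to 0$. For each such scheme, view the encoder outputs $\bx_i^n = f_i(W_i)$ as random length-$n$ vectors driven by the uniform messages $W_i$. Since the $W_i$ are mutually independent and each $\bx_i^n$ is a deterministic function of $W_i$ alone, the induced joint distribution factors as $P_{\bx_1^n}\cdots P_{\bx_K^n}$, and the per-codeword power bound $\|\bx_i^n\|_2^2 \leq n P_i$ holds with probability one by definition of a valid code. Fano's inequality combined with the data processing inequality along the Markov chain $W_i \to \bx_i^n \to \by_i^n$ gives
\[
\lceil n R_i \rceil \leq I(W_i;\by_i^n) + 1 + n R_i P_e^{(n)} \leq I(\bx_i^n;\by_i^n) + 1 + n R_i P_e^{(n)},
\]
so that $R_i \leq (1/n) I(\bx_i^n;\by_i^n) + o(1)$, placing $(R_1,\ldots,R_K)$ in $\mathcal{R}^\star$ after a standard closure argument.

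For achievability, fix $n$ and a product distribution $P_{\bx_1^n}\cdots P_{\bx_K^n}$ with $\Pr(\|\bx_i^n\|_2^2 \leq n P_i)=1$. Treat $n$ uses of the original IFC as a single use of an extended vector IFC whose per-user input alphabet is $\Rea^n$, and apply standard random coding at this super-symbol level: each user $i$ independently draws $2^{N r_i}$ super-codewords i.i.d.\ from $P_{\bx_i^n}$, and receiver $i$ decodes by joint typicality with its own codebook while treating the remaining super-signals as i.i.d.\ (non-Gaussian) noise. Joint AEP and packing-lemma arguments then guarantee that any $r_i < (1/n) I(\bx_i^n;\by_i^n)$ is achievable as $N\to\infty$, corresponding on the original channel to a per-symbol rate of $r_i$ over $nN$ uses. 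Because each super-codeword individually satisfies $\|\bx_i^n\|_2^2 \leq n P_i$ by construction, the concatenated length-$nN$ codeword satisfies the $nN P_i$ bound with probability one. Taking $r_i \uparrow (1/n) I(\bx_i^n;\by_i^n)$ and then the union over $n$ and over qualifying product distributions gives $\mathcal{R}^\star \subseteq {\cal C}_{IFC}$.

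The main technical obstacle is the bookkeeping around the per-codeword (rather than expected) power constraint: in the converse one must note that the constraint is automatically inherited by the induced input distribution as a $\Pr=1$ event, and in the achievability one must check that i.i.d.\ concatenation of super-blocks individually satisfying the constraint preserves it deterministically on each length-$nN$ codeword (which is why the lemma insists on $\Pr=1$ rather than an expected-value constraint in the union). Handling the ceiling $\lceil nR_i\rceil$ contributes only an additive constant, and standard closure/subsequence arguments dispose of the remaining limit-taking subtleties.
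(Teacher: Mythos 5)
The paper does not actually carry out a proof here: it simply states that the lemma ``can be proved by extending the argument of \cite{Ahl71} to $K$-user memoryless IFCs with possibly continuous alphabets and average power constraints'' and omits all details. Your proposal is essentially the argument that the citation is pointing to --- a Fano/data-processing converse using the induced product distribution on $(\bx_1^n,\ldots,\bx_K^n)$, plus super-symbol random coding with interference treated as noise for achievability --- so you are filling in what the paper declines to write down, rather than taking a different route. Your observation that the per-codeword ($\Pr=1$) power constraint, rather than an expected-power constraint, is exactly what makes both directions go through cleanly (it is inherited automatically in the converse, and preserved under concatenation of super-blocks in the achievability) is the right point to flag, and it matches the footnote on page~3 about why the per-codeword constraint was adopted.

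Two places in your sketch would need genuine care if one were to write this out in full, and both are the same places the paper's citation is silently absorbing work. First, the right-hand side of~(\ref{eq:limcap}) is a literal union, not its closure; your converse produces $R_i \le (1/n) I(\bx_i^n;\by_i^n)+o(1)$, which puts the achievable tuple only in the closure of $\mathcal{R}^\star$. One must either argue that $\mathcal{R}^\star$ is already closed (via a superadditivity/time-sharing-across-$n$ argument) or read the stated equality as one with closures. Second, the joint-typicality/packing step for achievability is standard for discrete alphabets but requires a weak-typicality or quantization argument for continuous inputs with the $\Pr=1$ power ball constraint; your phrase ``treating the remaining super-signals as i.i.d.\ (non-Gaussian) noise'' is the right decoder, but the AEP machinery behind it is not automatic in the continuous case. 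Neither of these is an error in your approach --- they are the content of ``the details are omitted'' --- but they are worth naming as the nontrivial parts.
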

\begin{proof}
The lemma can be proved by extending the argument of \cite{Ahl71} to $K$-user memoryless IFCs with possibly continuous alphabets and average power constraints on the inputs. The details are omitted. 
\end{proof}

\begin{lemma}
\label{lemma:deterministic}
Given a gain matrix $ H $ and power constraints $ \mathbf{P} =
(P_1,\ldots,P_K) $, let $ {\cal C}_D(H,\mathbf{P}) $ denote the
capacity region of the deterministic IFC defined by
\[
\bar{y}_i(t) = \sum_{j=1}^K h_{ji} \bar{x}_j(t) \text{ , } i=1,\ldots,K
\]
where the inputs are constrained to be integers (i.e. $\bar{x}_i(t) \in \mathbb{Z}$, $i=1,\ldots,K$, $t=1,2\ldots$) and satisfy an average power constraint $\frac{1}{n}\sum_{t=1}^n \bar{x}_i(t)^2 \le P_i$ for all $i$.

Then, $\mathbf{R} \in \creg(H,\mathbf{1},\mathbf{P})\Rightarrow
(\mathbf{R}-\boldsymbol{\Delta}) \in \creg_D(H,\mathbf{P})$ with
$\boldsymbol{\Delta}=(\delta_1,\ldots,\delta_K)$, $\delta_i =
\frac{1}{2} \log_2(1+2 \sum_{j=1}^K h_{ji}^2)$, $i=1,\ldots,K$, where
$ \creg(H,\mathbf{1},\mathbf{P}) $ is the capacity region of the
corresponding GIFC (see Definition~\ref{def:capreg}).
\end{lemma}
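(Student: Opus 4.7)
The plan is to convert any GIFC coding scheme achieving rate $\mathbf{R}$ into a deterministic integer-input IFC scheme achieving rate $\mathbf{R}-\boldsymbol{\Delta}$, by rounding codewords to integer vectors and bounding the resulting rate loss. Throughout, I rely on the mutual-information characterization of capacity from the preceding lemma, which reduces the problem to comparing two mutual informations: a Gaussian one with real-valued inputs and a deterministic one with rounded integer inputs.

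I would first invoke the preceding lemma: for any $\mathbf{R}\in\creg(H,\mathbf{1},\mathbf{P})$, there exist $n$ and independent product distributions $P_{\bx_1^n},\ldots,P_{\bx_K^n}$ with $\Pr(\|\bx_i^n\|_2^2\le nP_i)=1$ and $R_i\le\tfrac{1}{n}I(\bx_i^n;\by_i^n)$. Define $\bar{\bx}_i^n$ by rounding each component of $\bx_i^n$ to the nearest integer, so that the errors $\mathbf{e}_i^n=\bx_i^n-\bar{\bx}_i^n$ satisfy $\|\mathbf{e}_i^n\|_2^2\le n/4$, and the $\bar{\bx}_i^n$ remain independent across users (as deterministic functions of the independent $\bx_i^n$). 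The rounded codewords may have slightly inflated norms, but a uniform scale-down of the original $\bx_i^n$ by a factor $1-o(1)$ as $P_i\to\infty$ restores the per-codeword constraint $\|\bar{\bx}_i^n\|_2^2\le nP_i$ required by $\creg_D(H,\mathbf{P})$, at a rate cost that can be absorbed into $\delta_i$.

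The heart of the argument is the inequality $I(\bx_i^n;\by_i^n)\le I(\bar{\bx}_i^n;\bar{\by}_i^n)+n\delta_i$, where $\bar{\by}_i^n=\sum_j h_{ji}\bar{\bx}_j^n$. Writing $\by_i^n=\bar{\by}_i^n+\tilde{\mathbf{n}}_i^n$ with $\tilde{\mathbf{n}}_i^n=\sum_j h_{ji}\mathbf{e}_j^n+\bz_i^n$, a Cauchy--Schwarz estimate on the rounding term (using $\|\mathbf{e}_j^n\|_2^2\le n/4$) shows that the total second moment of $\tilde{\mathbf{n}}_i^n$ is at most $n(1+2\sum_j h_{ji}^2)$. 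The Gaussian channel at receiver $i$ is then essentially the deterministic IFC followed by an additive perturbation of this bounded variance, and a Gaussian-maximum-entropy comparison should yield the penalty $\tfrac{1}{2}\log_2(1+2\sum_j h_{ji}^2)=\delta_i$ per channel use.

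The main obstacle is that $\tilde{\mathbf{n}}_i^n$ is not independent of the inputs: $\mathbf{e}_j^n$ is a deterministic function of $\bx_j^n$, and in particular $\mathbf{e}_i^n$ is correlated with the signal of interest. This prevents a direct appeal to standard degraded-channel inequalities. I expect the cleanest remedy is a chain-rule decomposition $I(\bx_i^n;\by_i^n)=I(\bar{\bx}_i^n;\by_i^n)+I(\mathbf{e}_i^n;\by_i^n\mid\bar{\bx}_i^n)$, handling the first term via data processing from $\bar{\by}_i^n$ (together with a uniform dither to reconcile the discrete entropy of $\bar{\by}_i^n$ with the differential entropy of $\by_i^n$), and bounding the second term by Gaussian maximum entropy applied to the conditional variance, which remains controlled by $1+2\sum_j h_{ji}^2$. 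Summing over the $n$ channel uses and dividing by $n$ then produces the stated per-user rate loss $\delta_i$.
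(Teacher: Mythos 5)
Your overall strategy---quantize the inputs to integers and charge the loss to a per-user constant $\delta_i$---is the same as the paper's, but three steps in your sketch have genuine gaps, and they are exactly the places where the paper's proof does something different. First, the power constraint: nearest-integer rounding can increase $\|\bar{\bx}_i^n\|_2^2$ beyond $nP_i$, and your repair (pre-scaling the inputs by $1-o(1)$ as $P_i\to\infty$, absorbing the cost into $\delta_i$) only yields an asymptotic statement, whereas the lemma asserts, for \emph{every} $\mathbf{P}$, the loss given by the fixed constant $\delta_i=\frac{1}{2}\log_2(1+2\sum_j h_{ji}^2)$. The paper avoids the issue entirely by rounding toward zero, $f(x)=\lfloor x\rfloor 1(x>0)+\lceil x\rceil 1(x<0)$, so $|f(x)|\le |x|$ and the integer parts satisfy the original per-codeword constraints exactly. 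Second, in your decomposition $I(\bx_i^n;\by_i^n)=I(\bar{\bx}_i^n;\by_i^n)+I(\mathbf{e}_i^n;\by_i^n\mid\bar{\bx}_i^n)$, the first term cannot be handled ``via data processing from $\bar{\by}_i^n$'': the chain $\bar{\bx}_i^n\to\bar{\by}_i^n\to\by_i^n$ is not Markov, because the perturbation $\sum_j h_{ji}\mathbf{e}_j^n+\bz_i^n$ is correlated with $\bar{\bx}_i^n$ given $\bar{\by}_i^n$ (through $\mathbf{e}_i^n$ and through the joint dependence of $\bar{\by}_i^n$ and $\mathbf{e}_j^n$ on the $\bx_j^n$); the ``uniform dither'' remark does not address this. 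One can patch it by writing $I(\bar{\bx}_i^n;\by_i^n)\le I(\bar{\bx}_i^n;\bar{\by}_i^n)+I(\bar{\bx}_i^n;\by_i^n\mid\bar{\by}_i^n)$, but then the conditional term must itself be bounded, which is additional work you have not supplied.

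Third, and most seriously, the term $I(\mathbf{e}_i^n;\by_i^n\mid\bar{\bx}_i^n)$ cannot be bounded by ``Gaussian maximum entropy applied to the conditional variance'': conditioned on $\bar{\bx}_i^n$, the output still contains the interferers' signals, so its conditional variance is of order $P$ and a direct max-entropy bound gives order $\log_2 P$, not a constant. What is really needed is a bound on a \emph{difference} of two large differential entropies, i.e., on how much an input-correlated, bounded perturbation can increase entropy when an independent unit-variance Gaussian is present (for instance via $I(A;A+W+\bz)\le I(A;A+\bz)$ for $W$ independent of $(A,\bz)$); even with such a patch this route gives a penalty of roughly $\frac{1}{2}\log_2(1+\tfrac{1}{4}h_{ii}^2)+\frac{1}{2}\log_2(1+\tfrac{1}{4}\sum_j h_{ji}^2)$, which exceeds the claimed $\delta_i$ when the gains are large. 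The paper resolves this crux with a different device: it splits each unit-variance noise as $z_i=z_{i1}+z_{i2}$ with independent $\mathcal{N}(0,1/2)$ halves, pairs the integer parts with $z_{i1}$ and the bounded fractional parts with $z_{i2}$, and runs a chain of entropy inequalities that isolates $I(\bx_{i1}^n;\by_{i1}^n)$ plus exactly $\frac{n}{2}\log_2\bigl(\bigl(\sum_j h_{ji}^2+\tfrac12\bigr)/\tfrac12\bigr)=n\delta_i$, after which data processing (now a valid Markov chain, since $z_{i1}$ is independent of everything) removes the remaining noise to reach the deterministic output $\bar{\by}_{i1}^n$. Without the noise-splitting (or an equivalent mechanism) and without the power-preserving rounding, your sketch does not establish the lemma as stated.
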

\begin{proof}
Let $f:\mathbb{R} \to \mathbb{R}$ be defined as $f(x)\dfn \lfloor x
\rfloor \cdot 1(x>0) + \lceil x \rceil \cdot 1(x<0)$, and let
$g:\mathbb{R} \to \mathbb{R}$ be defined as $g(x) = x - f(x)$. In
addition, let $x_{i1}=f(x_i)$, $x_{i2}=g(x_i)$, $y_{i1}=\sum_{j=1}^K
h_{ji} x_{j1}+z_{i1}$, and $y_{i2}=\sum_{j=1^K} h_{ji} x_{j2}+z_{i2}$,
where $z_{i1},z_{i2}\sim \n(0,1/2)$ are independent. Then the outputs
of the $K$-user Gaussian IFC can be written as $y_i = y_{i1}+y_{i2}$,
for $i=1,\ldots, K$ (see Figure \ref{fig:KxKchannel}). 

\begin{figure}[t]
\centerline{\includegraphics[width=3in]{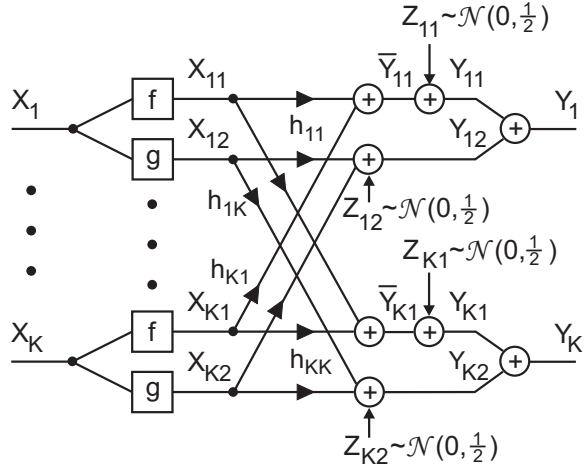}}
\caption{\footnotesize A decomposition of a $K$-user Gaussian IFC. \label{fig:KxKchannel} }
\end{figure}

If $\mathbf{R}=(R_1,\ldots,R_K) \in \creg(H,\mathbf{1},\mathbf{P})$,
then for any $\eta > 0$ there exists a family of codebooks
$\{C_{1,n},\ldots,C_{K,n}\}_n$ satisfying the average power
constraints, and decoding functions $\{ g_{1,n},\ldots,g_{K,n}\}_{n}$
with average decoding error probability going to 0 as $n\to \infty$,
such that $\lim_{n\to \infty} \frac{1}{n}\log_2|C_{i,n}| \ge R_i -
\eta$.  
For block-length $n$ we have:
\begin{align}
n(R_i-\eta-&\epsilon_n) \nonumber\\
\le& I(\bx_i^n;\by_i^n) \label{eq:fano}\\
\le& I(\bx_{i1}^n, \bx_{i2}^n;\by_{i1}^n,\by_{i2}^n) \label{eq:datapr}\\
=& h(\by_{i1}^n, \by_{i2}^n) - h(\by_{i1}^n, \by_{i2}^n| \bx_{i1}^n, \bx_{i2}^n)\nonumber\\
=& h(\by_{i1}^n)+h(\by_{i2}^n|\by_{i1}^n) - h\Bigg(\sum_{\stackrel{j=1}{j\ne i}}^K h_{ji}\bx_{j1}^n+\bz_{i1}^n, \sum_{\stackrel{j=1}{j\ne i}}^K h_{ji}\bx_{j2}^n+\bz_{i2}^n\Bigg) \nonumber\\
\le& h(\by_{i1}^n)+h(\by_{i2}^n) - h\Bigg(\sum_{\stackrel{j=1}{j\ne i}}^K h_{ji}\bx_{j1}^n +\bz_{i1}^n\Bigg)- h\Bigg(\sum_{\stackrel{j=1}{j\ne i}}^K h_{ji}\bx_{j2}^n+\bz_{i2}^n \bigg| \sum_{\stackrel{j=1}{j\ne i}}^K h_{ji}\bx_{j1}^n+\bz_{i1}^n\Bigg) \label{eq:cond1}\\
\le& h(\by_{i1}^n)+h(\by_{i2}^n)- h\Bigg(\sum_{\stackrel{j=1}{j\ne i}}^K h_{ji}\bx_{j1}^n +\bz_{i1}^n\Bigg)\nonumber\\
&- h\Bigg(\sum_{\stackrel{j=1}{j\ne i}}^K h_{ji}\bx_{j2}^n+\bz_{i2}^n \bigg| \sum_{\stackrel{j=1}{j\ne i}}^K h_{ji}\bx_{j1}^n+\bz_{i1}^n,\bx_{12}^n,\ldots,\bx_{K2}^n\Bigg) \nonumber\\
\label{eq:cond2}\\
=& h(\by_{i1}^n)+h\Bigg(\sum_{j=1}^K h_{ji}\bx_{j2}^n+\bz_{i2}^n\Bigg)- h\Bigg(\sum_{\stackrel{j=1}{j\ne i}}^K h_{ji}\bx_{j1}^n+\bz_{i1}^n\Bigg)- h(\bz_{i2}^n) \nonumber\\
\le& I(\bx_{i1}^n; \by_{i1}^n)+ \frac{n}{2} \log_2\bigg[2 \pi e \bigg(\sum_{j=1}^K h_{ji}^2+\frac{1}{2}\bigg) \bigg] - \frac{n}{2} \log_2\bigg(2 \pi e \frac{1}{2}\bigg) \label{eq:gbound}\\
\le& I(\bx_{i1}^n; \bar{\by}_{i1}^n) + \frac{n}{2} \log_2\bigg(1+2\sum_{j=1}^K h_{ji}^2\bigg)
\label{eq:datapr2}
\end{align}
where $\epsilon_n\to 0$ as $n\to \infty$. We used Fano's inequality in
(\ref{eq:fano}), the data processing inequality in (\ref{eq:datapr}),
the fact that conditioning reduces entropy in (\ref{eq:cond1}) and
(\ref{eq:cond2}), the Gaussian bound for differential entropies in
(\ref{eq:gbound}), noting that $|x_{j2}(t)| \le 1$, $t=1,\ldots,n$,
and used the data processing inequality in (\ref{eq:datapr2}), where
we defined $\bar{\by}_{i1}^n\dfn\sum_{j=1}^n h_{ji} \bx_{j1}^n$. 

Since $\|\bx_{i1}^n\|_2^2 \le \|\bx_{i}^n\|_2^2 \le n P_i$ when the
channel is used with codebooks satisfying the power constraints, it
follows that the codebooks
$\{C_{1,n},\ldots,C_{K,n}\}_n$ induce
distributions on $\bx_{i1}^n$, $i=1,\ldots, K$, that satisfy the power
constraints $\{P_i\}_{i=1}^K$. These distributions can be used in
(\ref{eq:limcap}) to conclude that $(R_1-\eta-\delta_1,\ldots,
R_K-\eta-\delta_K,\ldots)$ is an achievable rate vector in
$\creg_D(H,\mathbf{P})$. Since $\eta>0$ is arbitrary, the result
follows.   
\end{proof}
\begin{lemma}
\label{lemma:pruned_codebooks}
Let $\{(C_{i,n},g_{i,n}, P_{e,i,n})\}_{i=1}^K$ denote a block length
$n$ coding scheme for a $K$-user interference channel satisfying
average (per codeword) power constraints $\{P_i\}_{i=1}^K$ with rates
$\{R_i\}_{i=1}^K$ and average error probabilities
$\{P_{e,i,n}\}_{i=1}^K$. If
$\{\tilde{C}_{1,n},\ldots,\tilde{C}_{K,n}\}$ is any set of codebooks
with $\tilde{C}_{i,n} \subseteq C_{i,n}$, $|\tilde{C}_{i,n}| \ge
|C_{i,n}|/\alpha_i$ and $\alpha_i \ge 1$, for all $i=1,\ldots, K$,
then $\{(\tilde{C}_{i,n},g_{i,n},\tilde{P}_{e,i,n} )\}_{i=1}^K$ is a
coding scheme with rates no smaller than $\{R_i-\frac{1}{n}\log_2
\alpha_i\}_{i=1}^K$ and average error probabilities
$\{\tilde{P}_{e,i,n}\}_{i=1}^K$ satisfying $\tilde{P}_{e,i,n}
\le(\prod_{j=1}^K \alpha_j) P_{e,i,n}$, and also satisfying the power
constraints $\{P_i\}_{i=1}^K$. 
\end{lemma}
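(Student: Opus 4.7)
The three claims—rates, power constraints, and error probabilities—are handled separately, and the first two are essentially bookkeeping. The plan is as follows.

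First, I would observe that since $\tilde{C}_{i,n} \subseteq C_{i,n}$, every pruned codeword already satisfies the per-codeword power constraint inherited from $C_{i,n}$, giving the power-constraint claim immediately. The rate bound is equally direct: the rate of the pruned scheme for user $i$ is $(1/n)\log_2 |\tilde{C}_{i,n}| \ge (1/n)\log_2 (|C_{i,n}|/\alpha_i) = R_i - (1/n)\log_2 \alpha_i$.

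The substantive step is bounding the average error probabilities $\tilde{P}_{e,i,n}$. The plan is to write the original error probability as a uniform average over all $K$-tuples of messages and then restrict this average to only those tuples lying in $\tilde{C}_{1,n}\times\cdots\times\tilde{C}_{K,n}$. Concretely, letting $P_{e,i,n}(w_1,\ldots,w_K)$ denote the conditional probability that $g_{i,n}(\mathbf{y}_i^n)\ne w_i$ given that the transmitters send codewords indexed by $(w_1,\ldots,w_K)$, the messages are uniform and independent, so
\begin{equation}
P_{e,i,n} \;=\; \frac{1}{\prod_{j=1}^K |C_{j,n}|} \sum_{(w_1,\ldots,w_K)\in C_{1,n}\times\cdots\times C_{K,n}} P_{e,i,n}(w_1,\ldots,w_K). \nonumber
\end{equation}
Dropping all nonnegative terms outside the restricted product set yields
\begin{equation}
P_{e,i,n} \;\ge\; \frac{1}{\prod_{j=1}^K |C_{j,n}|} \sum_{(w_1,\ldots,w_K)\in \tilde{C}_{1,n}\times\cdots\times\tilde{C}_{K,n}} P_{e,i,n}(w_1,\ldots,w_K) \;=\; \frac{\prod_{j=1}^K |\tilde{C}_{j,n}|}{\prod_{j=1}^K |C_{j,n}|}\, \tilde{P}_{e,i,n}, \nonumber
\end{equation}
because the inner sum divided by $\prod_j |\tilde{C}_{j,n}|$ is exactly the average error probability $\tilde{P}_{e,i,n}$ under uniform independent messages on the pruned codebooks used with the same decoder $g_{i,n}$. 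Rearranging and applying the hypothesis $|\tilde{C}_{j,n}|\ge |C_{j,n}|/\alpha_j$ gives $\tilde{P}_{e,i,n} \le (\prod_{j=1}^K \alpha_j) P_{e,i,n}$, which is the claimed bound.

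There is no real obstacle here; the main conceptual point is simply that a uniform average of nonnegative quantities can only increase by at most the reciprocal of the fractional mass of the restricted subset—and that fractional mass is at least $\prod_j (1/\alpha_j)$ under the product-of-codebooks structure that follows from the independence of the messages across users. Once this observation is made, the three conclusions fall out in one line each, and the proof is complete.
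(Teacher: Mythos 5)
Your proposal is correct and follows essentially the same route as the paper: the rate and power claims are handled by the same one-line observations, and the error bound is obtained exactly as in the paper's proof by lower-bounding the uniform average of $P_{e,i,n}(\cdot)$ over all message tuples by the restricted sum over $\tilde{C}_{1,n}\times\cdots\times\tilde{C}_{K,n}$, recognizing that restricted sum as $\bigl(\prod_j |\tilde{C}_{j,n}|\bigr)\tilde{P}_{e,i,n}$, and invoking $|\tilde{C}_{j,n}|\ge |C_{j,n}|/\alpha_j$. No gaps; nothing further is needed.
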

\begin{proof}
Since $C_{i,n}$ has rate $R_i$, we have that for $i=1,\ldots, K$ the rate $\tilde{R}_i$ of $\tilde{C}_{i,n}$ satisfies:
\beq
\tilde{R}_i=\frac{1}{n}\log_2 |\tilde{C}_{i,n}| \ge \frac{1}{n}\log_2\left( \frac{|C_{i,n}|}{\alpha_i}\right) = R_i - \frac{1}{n} \log_2 \alpha_i.
\eeq

During communication with the codebooks $\{C_{1,n},\ldots,C_{K,n}\}$ the transmitted messages (and hence the codewords) are chosen uniformly and independently, and as a result we have:
\beqa
P_{e,i,n} &=& \frac{1}{\prod_{j=1}^K |C_{j,n}|} \sum_{\mb{c}_1\in C_{1,n}} \cdots \sum_{\mb{c}_K\in C_{K,n} } P_{e,i,n}(\mb{c}_1,\ldots,\mb{c}_K) \nonumber\\
&\ge& \frac{1}{\prod_{j=1}^K |C_{j,n}|} \sum_{\mb{c}_1\in \tilde{C}_{1,n}} \cdots \sum_{\mb{c}_K\in \tilde{C}_{K,n} } P_{e,i,n}(\mb{c}_1,\ldots,\mb{c}_K) \nonumber\\
&=& \frac{1}{\prod_{j=1}^K \alpha_j}\frac{\prod_{j=1}^K \alpha_j} {\prod_{j=1}^K |C_{j,n}|} \sum_{\mb{c}_1\in \tilde{C}_{1,n}} \cdots \sum_{\mb{c}_K\in \tilde{C}_{K,n} } P_{e,i,n}(\mb{c}_1,\ldots,\mb{c}_K) \nonumber\\
&\ge& \frac{1}{\prod_{j=1}^K \alpha_j}\frac{1} {\prod_{j=1}^K |\tilde{C}_{j,n}|} \sum_{\mb{c}_1\in \tilde{C}_{1,n}} \cdots \sum_{\mb{c}_K\in \tilde{C}_{K,n} } P_{e,i,n}(\mb{c}_1,\ldots,\mb{c}_K) \nonumber\\
&=& \frac{1}{\prod_{j=1}^K \alpha_j} \tilde{P}_{e,i,n}
\eeqa
where we denoted by $P_{e,i,n}(\mb{c}_1,\ldots,\mb{c}_K)$ the probability of decoding error when the codewords $\mb{c}_1,\ldots,\mb{c}_K$ are transmitted. 

Finally, since every codeword of $C_{i,n}$ satisfies the power constraint $P_i$, the codewords of $\tilde{C}_{i,n}$ satisfy the power constraint $P_i$.
\end{proof}

\begin{lemma}
\label{lemma:repeated_codewords}
Any achievable rate vector in a $K$-user IFC can be achieved by codebooks with no repeated codewords, i.e. for every $n=1,2,\ldots$ and $k=1,\ldots,K$, the codebook $C_{k,n}$ is such that $\mb{c}_i,\mb{c}_j \in C_{k,n} \Rightarrow \mb{c}_i \ne \mb{c}_j$. 
\end{lemma}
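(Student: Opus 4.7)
The plan is to start with any sequence of codebooks $\{(C_{i,n},g_{i,n},P_{e,i,n})\}_{i=1}^K$ that achieves the rate vector $\mathbf{R}$ and, for each $n$, produce a pruned codebook $\tilde C_{i,n}\subseteq C_{i,n}$ consisting of exactly one representative per distinct codeword of $C_{i,n}$; by construction $\tilde C_{i,n}$ has no repeated codewords. The central estimate I would then prove is $|\tilde C_{i,n}|\ge |C_{i,n}|(1-P_{e,i,n})$, after which Lemma~\ref{lemma:pruned_codebooks} together with a small amount of padding finishes the proof.

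To obtain the cardinality bound I would argue pointwise. Fix any realization of the codewords transmitted by the other $K-1$ users and of the noise at receiver $i$; the vector $\by_i^n$ then depends on user $i$'s transmission only through its codeword, so for every equivalence class of messages mapping to a common codeword in $C_{i,n}$ the deterministic decoder $g_{i,n}$ produces a single output, which can coincide with at most one message in that class. Summing over equivalence classes, the number of messages of user $i$ that are correctly decoded (given this realization) is at most $|\tilde C_{i,n}|$ out of $|C_{i,n}|$, and averaging over the other users' messages and the noise gives $\Pr(g_{i,n}(\by_i^n)=W_i)\le |\tilde C_{i,n}|/|C_{i,n}|$, which rearranges to the claimed inequality. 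Setting $\alpha_{i,n}\dfn |C_{i,n}|/|\tilde C_{i,n}|\le 1/(1-P_{e,i,n})\to 1$, Lemma~\ref{lemma:pruned_codebooks} immediately converts $\tilde C_{i,n}$ into a repetition-free coding scheme with rates at least $R_i-\tfrac{1}{n}\log_2\alpha_{i,n}\to R_i$ and with average error probabilities bounded by $(\prod_j\alpha_{j,n})P_{e,i,n}\to 0$.

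To match the exact message-set cardinality $2^{\lceil nR_i\rceil}$ required by Definition~\ref{def:capreg}, I would pad each $\tilde C_{i,n}$ with $2^{\lceil nR_i\rceil}-|\tilde C_{i,n}|$ further codewords chosen to be pairwise distinct, distinct from the pruned codewords, and power-constraint-feasible, which is always possible since the power-constrained input set is uncountable. Using the pruned decoder unchanged, the only new error event is that at least one user's uniformly chosen message lands on one of its padded codewords, whose probability is at most $\sum_j(1-|\tilde C_{j,n}|/2^{\lceil nR_j\rceil})\le\sum_j P_{e,j,n}\to 0$, so the total error probability still vanishes while the rates are exactly $R_i$ and no codebook contains repeated codewords. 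The one delicate step is the pointwise equivalence-class counting that underlies $|\tilde C_{i,n}|\ge |C_{i,n}|(1-P_{e,i,n})$; once that is in hand everything else reduces to a direct application of Lemma~\ref{lemma:pruned_codebooks} and elementary bookkeeping.
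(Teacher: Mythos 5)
Your proof is correct, but it takes a genuinely different route from the paper's. The paper reduces to the single-user channel from transmitter $i$ to receiver $i$ (using the fact that removing interference cannot hurt), discards the worst half of each codebook as measured by single-user error probability, observes that the surviving codewords must be distinct once the maximal single-user error drops below $1/2$, and then invokes Lemma~\ref{lemma:pruned_codebooks} with the fixed constant $\alpha_i=2$. You instead prune directly to one representative per distinct codeword and prove the quantitative bound $|\tilde C_{i,n}|\ge |C_{i,n}|(1-P_{e,i,n})$ by a clean converse-style counting argument: conditioning on the noise and the other users' codewords, the deterministic decoder can be correct for at most one message per equivalence class of repeated codewords, so the fraction of distinct codewords is at least the probability of correct decoding. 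This lets you apply Lemma~\ref{lemma:pruned_codebooks} with $\alpha_{i,n}\to 1$ rather than $2$, and your padding step restores message sets of size exactly $2^{\lceil nR_i\rceil}$, so you achieve the rate vector in the strict sense of Definition~\ref{def:capreg}, whereas the paper tolerates an arbitrarily small rate loss ($\eta$ plus $\tfrac1n\log_2 2$), which suffices for how the lemma is used. The trade-off is that your argument is slightly longer and hinges on the pointwise equivalence-class count (valid here because the decoders are deterministic and $W_i$ is independent of the noise and the other messages), while the paper's is shorter but coarser; both ultimately lean on the same pruning lemma.
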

\begin{proof}
Since $(R_1,\ldots, R_K)$ is achievable, for any $\eta > 0$ there exists a family of codebooks $\{C_{1,n},\ldots,C_{K,n}\}_n$ with $\frac{1}{n} \log_2|C_{i,n}| \ge R_i-\eta$ satisfying the average power constraints, and a family of decoding functions achieving average error probabilities $P_{e,i,n}$ going to 0 as $n\to \infty$ for every $i=1,\ldots,K$.

Consider the single user channel between transmitter $i$ and receiver
$i$ obtained from the interference channel by removing all the
interfering signals at receiver $i$. Since the interference cannot
help receiver $i$ decode the message of its own transmitter, it
follows that $R_i$ can be achieved in the single user channel with the
family of codebooks $\{ C_{i,n} \}_{i,n}$, for some decoding functions
$\{g'_{i,n}\}_{i,n}$ with average error probabilities no larger than
$P_{e,i,n}$, $i=1,\ldots,K$, $n=1,2,\ldots$. Let $\tilde{C}_{i,n}
\subset C_{i,n}$ be obtained by removing the worst (i.e. leading to
the largest error probability in the single user channel) half of the
codewords in $C_{i,n}$. It is easy to see that $\tilde{C}_{i,n}$ and
$g'_{i,n}$ achieve a maximal error probability in the single user
channel no larger than $2 P_{e,i,n}$, and in particular,
$\tilde{C}_{i,n}$ has no repeated codewords for $n$ large enough. The
result follows by using Lemma \ref{lemma:pruned_codebooks} with
$\{\tilde{C}_{i,n}\}_{i,n}$ as defined here, and $\alpha_i=2$,
$i=1,\ldots,K$, noting that as $n\to \infty$, $\frac{1}{n} \log_2
\alpha_i \to 0$ and $(\prod_{j=1}^K \alpha_j) P_{e,i,n} \to 0$ for
$i=1,\ldots,K$. 
\end{proof}

\subsection{Supporting results from additive combinatorics}
\label{sec:addcomb}
Given an abelian group
$ G $
 and two sets $ A, B \subseteq G $ let $ A + B $ denote the set of
 sums obtainable by adding one element from $ A $ to one element from
 $ B $.\footnote{We shall apply these results
  to the group of vectors of integers with component-wise   addition.} 
  Formally, $ A + B = \{a+b:a\in A,b \in B\} $.  The set of differences $ A - B $ can be defined analogously.
For any integer $ p $ and set $ A \subseteq G $, we denote by $ p\cdot
A $ the set consisting of all $ p $-multiples of elements of $ A $ or
$ p\cdot A = \{pa:a \in A\} $. For a non-negative integer $ p $ and $ A
\subseteq G $ we denote by $ p \star A $ the set of $p$-fold sums of $
A $ or $ p\star A = 
\{a_1+a_2+\ldots+a_p: a_i \in A \text{ for } i = 1,\ldots,p\}$. 
We shall also require the concept of a partial sum set.  Given $ A,B
  \subseteq G $, let $ F \subseteq A \times B $.  The partial sum set
  of $ A $ and $ B $ with respect to $ F $, denoted as $ A
  \stackrel{F}{+} B $, is defined as $ A \stackrel{F}{+} B =
  \{a+b:(a,b) \in F\} $.  If $ F = A\times B $, then $ A \stackrel{F}{+}
  B = A+B $.

We shall later need the following result on sum sets.
\begin{lemma}
\label{lem:setsum}
Let $ p,q \in \intgrs $ and $ K \in \Rea $ with $ K \geq 1$.
If $ |A+B| \leq K |A|^{1/2}|B|^{1/2} $ then $ |p\cdot A + q\cdot B|
\leq K^{d(p,q)}|A|^{1/2}|B|^{1/2} $ for $ d(p,q) = 2\max\{|p|,|q|\}+5 $.
\end{lemma}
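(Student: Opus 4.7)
The plan is to reduce the claim to a Pl\"unnecke--Ruzsa-type bound on iterated sumsets via the elementary inclusion $p \cdot A \subseteq p \star A$ (valid for $p \geq 0$), which holds because $pa = a + a + \cdots + a$ ($p$ summands) exhibits every element of the dilate as a member of the $p$-fold sumset. Consequently $|p \cdot A + q \cdot B| \leq |p \star A + q \star B|$ whenever $p, q \geq 0$, and the task reduces to bounding the iterated sumset on the right in terms of $K$, $|A|$, and $|B|$.

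First I would reduce to $p, q \geq 0$. Trivially $|p \cdot A + q \cdot B| = |(-p) \cdot A + (-q) \cdot B|$ since the two sets are negatives of one another, so only the relative signs matter. When $p$ and $q$ have opposite signs, one obtains a difference of iterated sumsets, $p \star A - |q| \star B$, in place of a sum; the Pl\"unnecke--Ruzsa machinery handles both interchangeably, with conversion between sums and differences via Ruzsa's triangle inequality costing at most a bounded power of $K$. The ``$+5$'' additive slack in $d(p,q)$ is designed precisely to absorb such sign-conversion overheads and low-index base cases.

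Next I would apply the Pl\"unnecke--Ruzsa inequality for two sets. The symmetric hypothesis can be read asymmetrically in two ways---either as $|A+B| \leq K (|B|/|A|)^{1/2} |A|$ (measured against $|A|$), or as $|A+B| \leq K (|A|/|B|)^{1/2} |B|$ (measured against $|B|$). Each reading, together with Pl\"unnecke's inequality for iterated sums, yields an upper bound on $|p \star A + q \star B|$: the first proportional to $|A|$ and the second proportional to $|B|$, both with exponents of $K$ linear in $p + q$. Taking the geometric mean of these two bounds cancels the $(|B|/|A|)^{1/2}$ and $(|A|/|B|)^{1/2}$ factors and restores the symmetric $|A|^{1/2}|B|^{1/2}$ form on the right, producing a combined exponent of $K$ of order $p + q \leq 2\max(|p|,|q|)$.

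The main obstacle will be the careful constant-tracking through the Ruzsa calculus, since each asymmetric invocation of Pl\"unnecke--Ruzsa and each conversion between sums and differences inflates the exponent of $K$. The stated $d(p,q) = 2\max(|p|,|q|) + 5$ captures the dominant $2\max(|p|,|q|)$ contribution from the iterated sumset estimate, together with a fixed additive slack of $5$ large enough to absorb sign-conversion costs, low-index base cases of the Pl\"unnecke--Ruzsa induction, and the overhead of the geometric-mean combination.
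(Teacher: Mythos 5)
The reduction you lean on is not supported by the tools you invoke. After passing from dilates to iterated sumsets via $p\cdot A\subseteq p\star A$, you need a bound on the \emph{mixed} iterated sumset $|p\star A+q\star B|$, but the Pl\"unnecke--Ruzsa inequality in the form available here (Lemma~\ref{lem:plun}) controls only iterated sums and differences of a \emph{single} set, $|p\star B-q\star B|$, measured against an auxiliary set; it says nothing about a sumset in which both $A$ and $B$ are iterated. Handling the mixed object requires either the different-summands version of Pl\"unnecke's inequality (a genuinely stronger statement) or a decoupling device. The paper supplies exactly that device: two applications of Ruzsa's covering lemma (Lemma~\ref{lem:covering}) give
$|p\cdot A+q\cdot B|\le |p\cdot A+A-A|\,|q\cdot B+B-B|\,|A+B|/(|A||B|)$,
after which each factor involves a single set, the inclusions $p\cdot A+A-A\subseteq p\star A+A-A$ and $q\cdot B+B-B\subseteq q\star B+B-B$ are taken, and Lemma~\ref{lem:plun} is applied with the two asymmetric readings of the hypothesis (the same two readings you propose). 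This covering/decoupling step is the heart of the proof and is absent from your outline.

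The quantitative accounting also does not deliver the stated $d(p,q)$. Even if you import the different-summands Pl\"unnecke inequality, converting the dilates to full iterated sumsets at the outset charges roughly $K^{2}$ per copy of the set iterated against the chosen base: with base $A$ one has $|A+A|\le K^{2}|A|$ and $|A+B|\le K|A|^{-1/2}|B|^{1/2}|A|$, giving $|p\star A+q\star B|\le K^{2|p|+|q|}|A|^{1-|q|/2}|B|^{|q|/2}$, and symmetrically with base $B$; the geometric mean of the two, after absorbing the residual imbalance using $|A|^{1/2}|B|^{-1/2}\le K$ and $|B|^{1/2}|A|^{-1/2}\le K$, yields an exponent of about $\frac{3}{2}(|p|+|q|)+\frac{1}{2}\,\big||p|-|q|\big|$, which exceeds $2\max\{|p|,|q|\}+5$ already for $p=q\ge 6$. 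Since $d(p,q)$ is used quantitatively downstream ($\epsilon(p,q)=1/(12d(p,q)+2)$ in Lemma~\ref{lem:mainlemma}, and the numerical bound in Section~\ref{sec:example}), ``some exponent linear in $p+q$ plus a slack of $5$'' does not prove the lemma as stated. In the paper's route the dilates are kept as dilates until after the covering step, so each of $p$ and $q$ is paid for only once ($K^{|p|+2}$ and $K^{|q|+2}$); the $+5$ arises from these two $+2$'s plus the single factor $K$ from the hypothesis, and the final rebalancing contributes $K^{||p|-|q||}$, giving $|p|+|q|+5+\big||p|-|q|\big|=2\max\{|p|,|q|\}+5$ --- not from sign-conversion overhead, which is handled simply by working with $|p|,|q|$ throughout (your sign reduction is fine in spirit, but it is not where the difficulty lies).
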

Our proof, stated below, follows fairly standard arguments from 
additive combinatorics (see Chapters~2 and~6 of~\cite{TaoVu06}) and is based
on the following key results concerning non-empty subsets $ A, B \subseteq G$.  Proofs of these results can be found in~\cite{TaoVu06}.
\begin{lemma}[Rusza's covering lemma]
\label{lem:covering}
There exists a subset $ X \subseteq B $
such that $ |X| \leq |A+B|/|A| $ and $ B \subseteq A-A+X $.
\end{lemma}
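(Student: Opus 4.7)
The plan is to prove Rusza's covering lemma by a standard greedy/maximality argument. Specifically, I would let $X$ be any subset of $B$ that is \emph{maximal} with respect to the property that the translates $\{A+x : x \in X\}$ are pairwise disjoint. Such an $X$ exists (and is finite when $A,B$ are finite, which is the case of interest here, since otherwise the bound $|A+B|/|A|$ is either vacuous or both sides can be compared in the sense of cardinals). The whole proof then reduces to checking two things: (i) the cardinality bound, and (ii) the covering relation $B \subseteq A-A+X$.

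For (i), since the sets $A+x$ for $x \in X$ are pairwise disjoint and each has cardinality $|A|$, and since each is contained in $A+B$ (because $x \in X \subseteq B$), I get
\[
|A|\cdot|X| \;=\; \Bigl|\bigcup_{x\in X}(A+x)\Bigr| \;\leq\; |A+B|,
\]
which rearranges to $|X|\leq |A+B|/|A|$. This part is essentially bookkeeping.

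For (ii), I would fix an arbitrary $b \in B$ and show $b \in A-A+X$. If $b \in X$, then picking any $a \in A$ (nonempty by hypothesis) gives $b = a - a + b \in A-A+X$. If $b \notin X$, then by the maximality of $X$, the set $X \cup \{b\}$ must violate the disjointness property, i.e., $(A+b) \cap (A+x) \neq \emptyset$ for some $x \in X$. Unpacking this intersection yields $a_1 + b = a_2 + x$ for some $a_1,a_2 \in A$, and therefore $b = a_2 - a_1 + x \in A-A+X$, as required.

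The main (mild) obstacle is just guaranteeing the existence of a maximal such $X$. In the finite setting the greedy procedure terminates after at most $|A+B|/|A|$ steps, which is automatic; more generally one would invoke Zorn's lemma on the family of subsets of $B$ whose $A$-translates are pairwise disjoint, ordered by inclusion, but this is not really needed for the applications in the paper since the sets $A,B \subseteq \intgrs^n$ appearing in Lemma~\ref{lem:setsum} will be finite. No deeper ingredient from additive combinatorics is invoked; the lemma is pure combinatorial geometry of translates in an abelian group.
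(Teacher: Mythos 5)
Your proof is correct: the greedy/maximal choice of $X$ with pairwise disjoint translates $A+x$, the counting bound $|A|\,|X| \leq |A+B|$, and the maximality argument giving $b \in A-A+X$ constitute the standard proof of Rusza's covering lemma, which is exactly the argument in the reference \cite{TaoVu06} that the paper cites in lieu of giving its own proof (the paper assumes $A,B$ non-empty, which covers your use of a point $a \in A$). No gap; the finiteness remark is also fine since the sets arising in Lemma~\ref{lem:setsum} are finite subsets of $\intgrs^n$.
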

\begin{lemma}[Pl\"{u}nnecke-Rusza inequality]
\label{lem:plun}
For positive integers $ p,q $ and any real valued $ \tilde{K} \geq 1 $,
if $ |A+B| \leq \tilde{K} |A| $ then $ |p\star B-q\star B| \leq
\tilde{K}^{p+q} |A| $. 
\end{lemma}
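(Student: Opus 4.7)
The plan is to prove the Pl\"{u}nnecke-Ruzsa inequality in two moves: first bound the iterated sumsets $|A' + m \star B|$ for a carefully chosen nonempty subset $A' \subseteq A$ (Pl\"{u}nnecke's inequality proper), then convert this sumset bound into the claimed difference-set bound via Ruzsa's triangle inequality. Throughout I assume $A$ and $B$ are finite and nonempty, since otherwise both sides of the claim are zero (or the hypothesis is vacuous).

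\textbf{Pl\"{u}nnecke step.} Choose a nonempty $A' \subseteq A$ minimizing the ratio $K' \dfn |A' + B|/|A'|$. Finiteness guarantees existence, and $K' \leq \tilde K$ since $A' = A$ is one candidate. The crucial auxiliary claim (a clean formulation due to Petridis) is that for every finite set $C$,
\[
|A' + B + C| \leq K' \cdot |A' + C|.
\]
I would prove this by induction on $|C|$. The base case $|C|=1$ is exactly the definition of $K'$. For the inductive step, let $C = C_0 \cup \{c\}$ with the bound known for $C_0$, isolate the portion of $A' + B + c$ that is genuinely new (i.e.\ not already present in $A' + B + C_0$), and control its size by restricting attention to the subset of $A'$ producing those new elements and invoking the minimality of $K'$ on that subset. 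Iterating this lemma with $C = (m-1) \star B$ then yields
\[
|A' + m \star B| \leq (K')^m |A'| \leq \tilde K^m |A'|
\]
for every positive integer $m$.

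\textbf{Ruzsa triangle inequality and conclusion.} Next I would establish Ruzsa's triangle inequality: for nonempty finite sets $U, V, W$ in the ambient abelian group,
\[
|U| \cdot |V - W| \leq |U + V| \cdot |U + W|.
\]
The standard one-line proof is the Ruzsa injection: for each $d \in V - W$ fix one representation $d = v_d - w_d$, and map $(d,u) \mapsto (v_d + u, w_d + u) \in (U+V) \times (U+W)$; this is injective because the difference of coordinates recovers $d$, after which $u$ is determined. Applying this with $U = A'$, $V = p \star B$, $W = q \star B$ and substituting the Pl\"{u}nnecke bound twice gives
\[
|A'| \cdot |p \star B - q \star B| \leq |A' + p \star B| \cdot |A' + q \star B| \leq \tilde K^{p} |A'| \cdot \tilde K^{q} |A'|,
\]
and dividing through by $|A'|$ together with $|A'| \leq |A|$ yields $|p \star B - q \star B| \leq \tilde K^{p+q} |A|$, as required.

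The main obstacle is Petridis's lemma in the Pl\"{u}nnecke step: the minimality of $K'$ on the subset $A'$ has to be leveraged very carefully when adjoining elements of $C$ one at a time, and without first passing to such a minimizing $A'$ a direct inductive attempt on $|A + m \star B|$ fails to produce the clean exponent $m$ and instead accumulates extra factors growing with $m$. Once that lemma is in place, the triangle-inequality step and the final substitution are essentially formal.
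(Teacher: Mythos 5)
Your argument is correct, and it is worth noting that the paper itself does not prove this lemma at all: it invokes it as a known result and points to Tao and Vu's book, where the Pl\"{u}nnecke--Ruzsa inequality is established via Pl\"{u}nnecke's graph-theoretic machinery (commutative layered graphs, magnification ratios, and a Menger-type argument). Your route is the later, elementary argument of Petridis: pick a nonempty $A' \subseteq A$ minimizing $|A'+B|/|A'|$, prove $|A'+B+C| \leq K'\,|A'+C|$ by induction on $|C|$ (the key point in the inductive step being that the set $D = \{a \in A' : a+c \in A'+C_0\}$ is contained in the set $A''$ of $a$ with $a+B+c \subseteq A'+B+C_0$, so minimality applied to $A''$ controls both the new sums and the undercount of $|A'+C|$), iterate to get $|A' + m\star B| \leq (K')^m|A'| \leq \tilde{K}^m |A'|$, and finish with the Ruzsa triangle inequality $|U|\,|V-W| \leq |U+V|\,|U+W|$ via the standard injection, taking $U = A'$, $V = p\star B$, $W = q\star B$, and using $|A'| \leq |A|$. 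All steps check out, including the fact that $K' \leq \tilde{K}$ because $A' = A$ is a candidate minimizer, so the hypothesis $|A+B| \leq \tilde{K}|A|$ enters exactly once. What your approach buys is a short, self-contained proof avoiding the graph-theoretic apparatus of the cited reference, at the mild cost that the inductive step of Petridis's lemma, which you only sketch, is the one place where the details genuinely need care; as you say, inducting directly on $|A + m\star B|$ without first passing to the minimizing subset $A'$ would not give the clean exponent.
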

\begin{proof} (of Lemma~\ref{lem:setsum}.)
By Lemma~\ref{lem:covering}, there exists a set $ X \subseteq q\cdot B $ with
$ |X| \leq |A+q\cdot B|/|A| $ satisfying $ q\cdot B \subseteq A - A + X $.  This, in turn, implies that $ p\cdot A + q\cdot B \subseteq p\cdot A + A - A + X $. Therefore,
\begin{align}
|p\cdot A + q\cdot B| &\leq |p\cdot A + A - A||X| \nonumber \\
&\leq |p\cdot A + A - A|\frac{|A+q\cdot B|}{|A|}\label{eq:covbnd1},
\end{align}
where we have used the trivial inequality $ |S+T| \leq |S||T|$ in the
first step.    
Again, by Lemma~\ref{lem:covering}, there exists a set $ Y \subseteq A $ with 
$ |Y| \leq |A+B|/|B| $ satisfying $ A \subseteq B-B+Y $, which implies that
$ A + q\cdot B \subseteq q\cdot B + B - B + Y $.  Proceeding as above,
we then have 
\begin{align}
|A + q\cdot B| &\leq |q\cdot B + B - B||Y| \nonumber \\
&\leq |q\cdot B + B - B|\frac{|A+B|}{|B|}\label{eq:covbnd2}.
\end{align}
Combining (\ref{eq:covbnd1}) and~(\ref{eq:covbnd2}) gives
\begin{align}
|p\cdot A + q\cdot B| &\leq |p\cdot A + A - A||q\cdot B + B -
B|\frac{|A+B|}{|A||B|}\nonumber \\ 
&\leq |p\cdot A + A - A||q\cdot B + B - B|K|A|^{-1/2}|B|^{-1/2}
\label{eq:combined1} \\
&\leq |p\star A + A - A||q\star B + B - B|K|A|^{-1/2}|B|^{-1/2}
\label{eq:combined2}
\end{align}
where (\ref{eq:combined1}) follows from the assumption of the
lemma and~(\ref{eq:combined2}) follows from the trivial inclusions $
p\cdot A + A - A \subseteq p\star A + A - A $ and $ q\cdot B + B - B
\subseteq q\star B+B-B $.  Rewriting the assumption of the lemma as
$ |A+B| \leq K|A|^{-1/2}|B|^{1/2}|A| $, we can apply
Lemma~\ref{lem:plun} with $ \tilde{K} = K|A|^{-1/2}|B|^{1/2} $ to
conclude
that
\begin{equation}
|q\star B + B - B| \leq K^{|q|+2}|A|^{-(|q|+2)/2}|B|^{(|q|+2)/2}|A|.
\label{eq:qBbnd}
\end{equation}
Similarly, rewriting the assumption of the lemma as
$ |A+B| \leq K|A|^{1/2}|B|^{-1/2}|B| $, we can apply
Lemma~\ref{lem:plun}, with the roles of $ A $ and $ B $ switched, to
obtain
\begin{equation}
|p\star A + A - A| \leq K^{|p|+2}|A|^{(|p|+2)/2}|B|^{-(|p|+2)/2}|B|.
\label{eq:pAbnd}
\end{equation}
Combining~(\ref{eq:combined2}) with~(\ref{eq:qBbnd})
and~(\ref{eq:pAbnd}) gives
\begin{align}
|p\cdot A + q\cdot B| &\leq
K^{|p|+2}|A|^{(|p|+2)/2}|B|^{-(|p|+2)/2}
K^{|q|+2}|A|^{-(|q|+2)/2}|B|^{(|q|+2)/2}K|A|^{1/2}|B|^{1/2} \nonumber \\
&\leq K^{2\max\{|p|,|q|\}+5}|A|^{1/2}|B|^{1/2} \label{eq:laststep},
\end{align}
where~(\ref{eq:laststep}) follows from
$ K|A|^{-1/2}|B|^{1/2} \geq 1 $ and $
K|A|^{1/2}|B|^{-1/2} \geq 1 $, which in turn follow
from the lemma's assumption $
|A+B| \leq K|A|^{-1/2}|B|^{1/2}|A| = K|A|^{1/2}|B|^{-1/2}|B| $
together with the obvious relations $
|A+B|\geq |A| $ and $ |A+B|\geq |B|$.  The lemma is thus established
with $ d(p,q) = 2\max\{|p|,|q|\} + 5 $.
\end{proof}

\begin{remark}
A considerably smaller $ d(p,q) $ for larger 
$ p $ and $ q $ can be obtained by applying the bounds of~\cite{Buk08}
to the factors $  
|p\cdot A + A - A| $ and $ |q\cdot B + B - B| $ 
appearing in~(\ref{eq:combined1}).  These bounds are obtained through
a more sophisticated application of Lemma~\ref{lem:plun}, that takes
greater advantage of the structure of sets like $ p_1 \cdot A + ... + p_m
\cdot A $.  The resulting $ d(p,q) $ grows logarithmically in
$ |p| $ and $ |q|$.  It is likely that a direct application of the
technique of~\cite{Buk08} to $ p\cdot A + q\cdot B $ would even further
improve $ d(p,q) $ for large $ p $ and $ q $.
\end{remark}

We shall also make use of the following lemma relating the entropy of
a sum of uniformly distributed independent random variables to a
partial sum set involving their supports.
\begin{lemma}
\label{lemma:exG}
Let $X$ and $Y$ be independent uniform random variables with support
sets $A \subseteq G$ and $B \subseteq G$ for some abelian group $ G $,
with $|A| \ge |B|$, such that 
\[
H(X+Y) \le (1+\epsilon) \log_2 |A|
\]
for some $\epsilon > 0$. Then, for any given $c > 1$ there exists a set $F \subseteq  A\times B$ such that
\[
|F| \ge |A| |B| \frac{c-1}{c} \text{ and } |A \stackrel{F}{+} B| \le \left(|A|^{1/2+c \epsilon} |B|^{-1/2} \right) |A|^{1/2} |B|^{1/2}.
\]
\end{lemma}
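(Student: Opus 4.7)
The plan is to turn the entropy hypothesis into a multiplicity statement about the convolution of the uniform distributions on $A$ and $B$, and then isolate, via Markov's inequality, a large subset of $A \times B$ on which the sum map is highly collapsed. For each $s \in A+B$, let
\begin{equation}
N(s) \dfn |\{(a,b) \in A \times B : a+b = s\}|, \nonumber
\end{equation}
so that $P(X+Y = s) = N(s)/(|A||B|)$. Expanding the entropy gives
\begin{equation}
H(X+Y) = \log_2(|A||B|) - \mathbb{E}[\log_2 N(X+Y)], \nonumber
\end{equation}
and the hypothesis $H(X+Y) \le (1+\epsilon)\log_2 |A|$ rearranges to $\mathbb{E}[\log_2 N(X+Y)] \ge \log_2|B| - \epsilon \log_2 |A|$. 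Since each $b \in B$ determines at most one $a \in A$ with $a+b = s$, we have the pointwise bound $N(s) \le |B|$ (the assumption $|A|\ge|B|$ ensures $|B| = \min(|A|,|B|)$), so $W \dfn \log_2 |B| - \log_2 N(X+Y)$ is a nonnegative random variable with $\mathbb{E}[W] \le \epsilon \log_2 |A|$.

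Next, I apply Markov's inequality to $W$ at the threshold $c\epsilon \log_2 |A|$, giving $P(W \ge c\epsilon \log_2|A|) \le 1/c$. Equivalently, $P(N(X+Y) > |B|\,|A|^{-c\epsilon}) \ge (c-1)/c$. Setting
\begin{equation}
S \dfn \{s : N(s) > |B|\,|A|^{-c\epsilon}\} \qquad \text{and} \qquad F \dfn \{(a,b) \in A \times B : a+b \in S\}, \nonumber
\end{equation}
the first conclusion is immediate: $|F| = \sum_{s \in S} N(s) = |A||B| \cdot P(X+Y \in S) \ge |A||B|(c-1)/c$. For the second, the defining threshold on $S$ forces
\begin{equation}
|A \stackrel{F}{+} B| = |S| \le \frac{\sum_{s \in S} N(s)}{|B|\,|A|^{-c\epsilon}} \le \frac{|A||B|}{|B|\,|A|^{-c\epsilon}} = |A|^{1+c\epsilon} = \bigl(|A|^{1/2+c\epsilon}|B|^{-1/2}\bigr)|A|^{1/2}|B|^{1/2}, \nonumber
\end{equation}
which is exactly the stated bound.

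The entire argument reduces to one entropy expansion followed by one application of Markov's inequality, so there is no deep obstacle; the only thing that needs to be tuned carefully is the Markov threshold, which is picked so that (i) the resulting probability bound delivers precisely the $(c-1)/c$ fraction required for $|F|$, and (ii) the cardinality estimate rearranges into the asymmetric product $|A|^{1/2+c\epsilon}|B|^{-1/2}\cdot|A|^{1/2}|B|^{1/2}$. The latter factorization is presumably chosen so that, when this lemma is applied downstream in the proof of Theorem~\ref{thm:rational}, the quantity $|A|^{1/2+c\epsilon}|B|^{-1/2}$ can play the role of ``$K$'' in the sumset hypothesis $|A+B|\le K|A|^{1/2}|B|^{1/2}$ of Lemma~\ref{lem:setsum}, feeding the additive-combinatorial machinery that yields the strict-inequality conclusion $DoF(H)<K/2$.
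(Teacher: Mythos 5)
Your proof is correct and follows essentially the same route as the paper: the same fiber-counting sets $S$ and $F$, the same double-counting bound on $|S|$, and the same Markov argument at the threshold $c\epsilon\log_2|A|$. The only cosmetic difference is that you obtain the bound $\mathbb{E}[\log_2(|B|/N(X+Y))]\le\epsilon\log_2|A|$ by expanding $H(X+Y)$ directly in terms of the multiplicities, whereas the paper derives the identical inequality by expanding $I(X+Y;Y)$ in two ways.
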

\begin{proof}
Define $ T(s) = \{(a,b) \in A \times B : a + b = s\} $.
Define $S \dfn \left\{s: |T(s)|\ge |B||A|^{-c\epsilon}
\right\}$ and $F \dfn \left\{(a,b) \in A\times B: a+b \in S
\right\}$. From these definitions we have $|A \stackrel{F}{+} B| =
|S|$. In addition, 
\beqa
|A||B| &\geq& \sum_{s \in S}|T(s)| \nonumber \\
       &\geq& |S||B||A|^{-c\epsilon}, \nonumber
\eeqa
where the last step follows from the definition of $ S $.
As a result, $|S| \le
|A|^{1+c \epsilon}$, giving the required upper bound for $|A
\stackrel{F}{+} B| = |S|$. 

To get a lower bound on $|F|$ we start by rewriting it as follows:
\beqa
|F| &=& \sum_{(x,y)\in A\times B} 1((x,y)\in F) \nonumber\\
&=& |A||B| Pr((X,Y) \in F) \nonumber\\
&=& |A||B| Pr(X+Y \in S) \nonumber\\
&=& |A||B|\left[1-Pr\left(X+Y \in S^c\right)\right]\nonumber\\
&=& |A||B|\left[1-Pr\left( |T(X+Y)| < |B||A|^{-c\epsilon}\right)\right].
\label{eq:F}
\eeqa
The probability term can be upper bounded using Markov's
inequality, noting that $ |B| \geq |T(s)| $ for all $ s $.
\beqa
Pr\left( |T(X+Y)| < |B||A|^{-c\epsilon} \right) &=& Pr\left( \log_2
\frac{|B|}{|T(X+Y)| }> c \epsilon \log_2 |A|\right) \nonumber
\\ 
&\le& \frac{E\left[\log_2 \frac{|B|}{|T(X+Y)|}\right]}{c
  \epsilon \log_2 |A|}. 
\label{eq:markov}
\eeqa
To bound the expectation in (\ref{eq:markov}) we note that for any
$(x,y) \in A \times B$, $Pr(X=x,Y=y)=\frac{1}{|A||B|}$, and expand
$I(X+Y;Y)$ in two different ways 
\beqa
I(X+Y;Y) &=& H(X+Y)-H(X+Y|Y) = H(X+Y)-H(X) = H(X+Y) - \log_2 |A| \nonumber\\
&\le& \epsilon \log_2 |A| \nonumber\\
I(X+Y;Y) &=& H(Y) - H(Y|X+Y) = \log_2 |B| - E\left[ -\log_2
  p_{Y|X+Y}(Y|X+Y) \right] \nonumber\\ 
&=& \log_2 |B| - E\left[ -\log_2 \frac{p_{Y,X+Y}(Y,X+Y)}{p_{X+Y}(X+Y)}
  \right] \nonumber\\ 
&=& \log_2 |B| - E\left[ -\log_2 \frac{1}{|A||B|} \frac{|A||B|}{|T(X+Y)|} \right] \nonumber\\
&=& E\left[\log_2 \frac{|B|}{|T(X+Y)|}\right]\nonumber
\eeqa
to obtain 
\beq
E\left[\log_2 \frac{|B|}{|T(X+Y)|}\right] \le \epsilon \log_2 |A|.
\label{eq:ubexp}
\eeq
From (\ref{eq:F}), (\ref{eq:markov}) and (\ref{eq:ubexp}) we obtain:
\[
|F| \ge |A||B| \left(1- \frac{1}{c}\right).
\]
\end{proof}

Finally, we shall also rely on the following important theorem
from additive combinatorics, as
stated in~\cite{TaoVu06}, relating partial sum sets to full sum sets.
\begin{theorem}[Balog-Szemer\'{e}di-Gowers theorem]
\label{thm:balog}
Let $ A \subseteq G $ and $ B \subseteq G $ for some abelian group $ G
$ and let $ F \subseteq A \times B $ be such that 
\[
 |F| \geq |A||B|/K \text{ and } |A\stackrel{F}{+}B| \leq
 K'|A|^{1/2}|B|^{1/2} 
\] for some $ K
\geq 1 $ and $ K' > 0 $.  Then there exists $ A' \subseteq A, B'
\subseteq B $ such that 
\begin{align*}
|A'| &\geq \frac{|A|}{4\sqrt{2}K} \\
|B'| &\geq \frac{|B|}{4K} \\
|A'+B'| &\leq 2^{12}K^5(K')^3|A|^{1/2}|B|^{1/2}.
\end{align*}
\end{theorem}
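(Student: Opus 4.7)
The plan is to recast the problem in graph-theoretic terms and apply the standard path-counting argument that underlies the Balog--Szemer\'{e}di--Gowers theorem (see Chapter~6 of~\cite{TaoVu06}).  View $F$ as the edge set of a bipartite graph $\Gamma$ with vertex parts $A$ and $B$.  The hypothesis $|F| \geq |A||B|/K$ then says $\Gamma$ has edge density at least $1/K$, while $|A \stackrel{F}{+} B| \leq K'|A|^{1/2}|B|^{1/2}$ bounds the number of distinct edge-sums realized by $\Gamma$.

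The central step is to establish a purely graph-theoretic lemma: in any such dense bipartite graph there exist subsets $A' \subseteq A$ and $B' \subseteq B$ with $|A'| \geq |A|/(4\sqrt{2}K)$ and $|B'| \geq |B|/(4K)$ such that every pair $(a', b') \in A' \times B'$ is the endpoint of many length-three paths in $\Gamma$, i.e., many $(b_1, a_1) \in B \times A$ with $(a', b_1),\,(a_1, b_1),\,(a_1, b') \in F$.  I would prove this by applying Cauchy--Schwarz to the degree sequence of $\Gamma$ to show that many pairs of $B$-vertices share a large common $A$-neighborhood, then combining this with dyadic pigeonholing and two rounds of truncation (one on each side of $\Gamma$) to discard low-degree vertices while retaining a uniform lower bound on path counts between the surviving pairs.

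Given such $A'$ and $B'$, I would conclude using the additive identity
\[
a' + b' \;=\; (a' + b_1) \;-\; (a_1 + b_1) \;+\; (a_1 + b')
\]
valid along any length-three $a'$-to-$b'$ path, in which each summand lies in $S := A\stackrel{F}{+}B$.  This exhibits the containment $A' + B' \subseteq S - S + S$, and a representation count --- bounding how many triples $(u,v,w) \in S^3$ can realize the same $s = u - v + w$, in light of the lower bound on the number of length-three paths through each $s$ --- combined with a Pl\"{u}nnecke--Ruzsa-style sumset bound (in the spirit of Lemma~\ref{lem:plun}) on $|S - S + S|$ gives the advertised $|A' + B'| \leq 2^{12} K^5 (K')^3 |A|^{1/2} |B|^{1/2}$ after absorbing the explicit numerical constants.

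The hard part is the graph-theoretic lemma in the second paragraph: its combinatorial content --- that every dense bipartite graph admits large vertex subsets whose pairs are heavily interconnected by short paths --- is what makes the BSG theorem nontrivial, and propagating the explicit constants $4\sqrt{2},\,4,\,2^{12}$ through the Cauchy--Schwarz and dyadic pigeonholing steps requires careful bookkeeping.  By contrast, once that lemma is in hand, the additive identity and sumset bound deduction is comparatively routine.
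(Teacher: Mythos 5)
The first thing to note is that the paper does not prove Theorem~\ref{thm:balog} at all: it is quoted from~\cite{TaoVu06}, and the only ``proof'' given is the remark that it is established in Chapter~6 of that book. So there is no in-paper argument to compare against; what you have written is an outline of the standard Balog--Szemer\'{e}di--Gowers argument from exactly that source (bipartite graph on $A\cup B$ with edge set $F$, many paths of length three, and the identity $a'+b'=(a'+b_1)-(a_1+b_1)+(a_1+b')$), which is the right strategy.

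As a proof, however, your proposal has a genuine gap in two places. First, essentially the whole content of the theorem is the graph-theoretic lemma you defer in your second paragraph: the existence of $A'\subseteq A$, $B'\subseteq B$ with $|A'|\ge |A|/(4\sqrt{2}K)$, $|B'|\ge |B|/(4K)$, and a uniform lower bound (of order $|A||B|/K^{5}$ up to an absolute constant) on the number of length-three paths joining \emph{every} pair $(a',b')\in A'\times B'$; saying ``Cauchy--Schwarz plus dyadic pigeonholing plus two truncations'' names the toolbox but does not produce the lemma, and the specific constants $4\sqrt{2}$, $4$, $2^{12}$, $K^{5}$ that the theorem asserts are exactly the bookkeeping that is missing. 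Second, the concluding step is wrong as stated: you cannot apply a Pl\"{u}nnecke--Ruzsa bound to $|S-S+S|$ for $S=A\stackrel{F}{+}B$, because no small-doubling hypothesis on $S$ is available (only the cardinality bound $|S|\le K'|A|^{1/2}|B|^{1/2}$), and the bare containment $A'+B'\subseteq S-S+S$ gives nothing. The correct finish is the pure double count you half-mention: for each $s\in A'+B'$ fix one pair $(a',b')$ with $a'+b'=s$; each of the $\ge N$ length-three paths $a'\!-\!b_1\!-\!a_1\!-\!b'$ yields a triple $(a'+b_1,\,a_1+b_1,\,a_1+b')\in S^3$ summing (with signs $+,-,+$) to $s$, the map $(b_1,a_1)\mapsto(u,v,w)$ is injective for fixed $(a',b')$, and triples attached to distinct $s$ are disjoint, so $|A'+B'|\,N\le |S|^3\le (K')^3|A|^{3/2}|B|^{3/2}$, which with $N\gtrsim |A||B|/K^5$ gives the stated bound. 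With the graph lemma actually proved and this counting substituted for the Pl\"{u}nnecke--Ruzsa appeal, your outline coincides with the proof in the cited reference.
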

Theorem~\ref{thm:balog} is proved in Chapter~6 of~\cite{TaoVu06}.

\subsection{Main lemma}
\label{sec:mainlemma}
In this subsection, we state and prove the main lemma at the core of 
our proof of Theorem~\ref{thm:rational}.
\begin{lemma}
\label{lem:mainlemma}
Let $p,q \in \mathbb{Z}$, $p,q \ne 0$, and
\[
\tilde{H} = \big[\tilde{h}_{ij}\big]=\left[
\begin{array}{ccc} 
1 & 0 & 0 \\ 
1 & p & 0 \\ 
1 & q & 1 
\end{array} \right],
\]
with the corresponding GIFC depicted in Figure~\ref{fig:3x3channel}.
Then $DoF(\tilde{H}) \le \frac{3}{2} - \epsilon(p,q)$, with
$\epsilon(p,q) > 0$.  In particular, this holds for
\[
\epsilon(p,q) = \frac{1}{12d(p,q)+2},
\]
where $ d(p,q) $ is as in Lemma~\ref{lem:setsum}.

\begin{figure}[t]
\centerline{\includegraphics[width=1.5in]{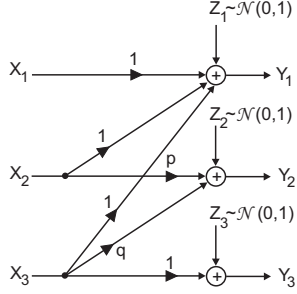}}
\caption{\footnotesize A three-user Gaussian IFC with channel matrix $\tilde{H}$. \label{fig:3x3channel} }
\end{figure}

\end{lemma}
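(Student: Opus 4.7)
The plan is to combine the reduction of Lemma~\ref{lemma:deterministic} with the combinatorial machinery assembled in Subsection~\ref{sec:addcomb}, arguing by contradiction. First, by Lemma~\ref{lemma:deterministic} it suffices to upper bound the degrees-of-freedom of the deterministic integer-input IFC with channel matrix $\tilde H$, since the additive rate offset $\boldsymbol\Delta$ is $O(1)$ and vanishes in the DoF normalization. Moreover, by Lemma~\ref{lemma:repeated_codewords} we may restrict attention to codebooks with distinct codewords, so the codewords $\bx_i^n$ of a rate-achieving scheme are uniformly distributed on finite sets $C_i\subseteq\intgrs^n$ with $H(\bx_i^n)=\log|C_i|$.

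Assume for contradiction that $DoF(\tilde H) > 3/2 - \epsilon$ with $\epsilon < 1/(12\,d(p,q)+2)$. Then for arbitrarily large $P$ there exist such codebooks whose sum-rate is at least $(3/4-\epsilon/2)\log P - o(\log P)$. From $\by_1^n=\bx_1^n+\bx_2^n+\bx_3^n$, $\by_2^n=p\bx_2^n+q\bx_3^n$, $\by_3^n=\bx_3^n$ and independence of the codewords, Fano's inequality yields the single-letter bounds
\begin{align*}
R_1 &\le \tfrac{1}{n}\bigl[H(\bx_1^n+\bx_2^n+\bx_3^n)-H(\bx_2^n+\bx_3^n)\bigr],\\
R_2 &\le \tfrac{1}{n}\bigl[H(p\bx_2^n+q\bx_3^n)-H(\bx_3^n)\bigr],\\
R_3 &\le \tfrac{1}{n}\,H(\bx_3^n).
\end{align*}
Combining these with the power-derived ceilings $H(\by_j^n)/n\le(1/2)\log P+O(1)$ and with the three ``genie'' two-user bounds $R_i+R_j\le(1/2)\log P+O(1)$ (obtained by handing the third user's signal to each of the two receivers, which collapses users $i,j$ into a deterministic two-user IFC of DoF at most $1$) pins every rate to the narrow window $R_i = (1/4)\log P + O(\epsilon\log P)+o(\log P)$, and hence
\[
H(\bx_2^n)/n,\; H(\bx_3^n)/n,\; H(\bx_2^n+\bx_3^n)/n\;=\;\tfrac14\log P+O(\epsilon\log P),\qquad H(p\bx_2^n+q\bx_3^n)/n\;\ge\;\bigl(\tfrac12-O(\epsilon)\bigr)\log P.
\]

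Set $A=C_2$, $B=C_3$, so $|A|,|B|\approx P^{n/4}$ and $H(\bx_2^n+\bx_3^n)\le(1+O(\epsilon))\log|A|$. Lemma~\ref{lemma:exG} (applied with any fixed $c>1$) produces a set $F\subseteq A\times B$ of cardinality at least $(1-1/c)|A||B|$ whose partial sumset satisfies $|A\stackrel{F}{+}B|\le P^{O(\epsilon n)}\,|A|^{1/2}|B|^{1/2}$, and feeding $(A,B,F)$ into the Balog-Szemer\'edi-Gowers theorem (Theorem~\ref{thm:balog}) extracts subsets $A'\subseteq A$, $B'\subseteq B$ of cardinality a constant fraction of $|A|,|B|$ satisfying $|A'+B'|\le K\,|A'|^{1/2}|B'|^{1/2}$ with $K=P^{O(\epsilon n)}$. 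Lemma~\ref{lem:setsum} then amplifies this to
\[
|p\cdot A'+q\cdot B'|\;\le\;K^{d(p,q)}\,|A'|^{1/2}|B'|^{1/2}\;\le\;P^{n\bigl(1/4+O(\epsilon\, d(p,q))\bigr)}.
\]
On the other hand, because $|A'|/|A|$ and $|B'|/|B|$ are bounded below by constants, Lemma~\ref{lemma:pruned_codebooks} shows that replacing $C_2,C_3$ by $A',B'$ loses only $o(\log P)$ in per-user rate; the pruned codebooks therefore still force $H(p\bx_2^n+q\bx_3^n)/n\ge(1/2-O(\epsilon))\log P$, whence $|p\cdot A'+q\cdot B'|\ge P^{n(1/2-O(\epsilon))}$. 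Comparing exponents gives $1/2-O(\epsilon)\le 1/4+O(\epsilon\, d(p,q))$, which is violated once $\epsilon<1/(12\,d(p,q)+2)$, and the contradiction is complete. The principal obstacle is the careful bookkeeping of the $\epsilon$-slack through the three combinatorial amplifications (Lemma~\ref{lemma:exG} hides a tunable constant $c\to 1^+$, Theorem~\ref{thm:balog} introduces a factor $(K')^3$, and Lemma~\ref{lem:setsum} a factor $K^{d(p,q)}$), whose joint effect must be combined to recover the explicit threshold $\epsilon(p,q)=1/(12\,d(p,q)+2)$; a secondary subtlety is verifying that the sum-rate hypothesis forces the symmetric rate split, which is the content of the three genie pair bounds above.
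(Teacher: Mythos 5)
Your proposal is correct and takes essentially the same route as the paper's proof: reduction to the integer-input deterministic channel via Lemma~\ref{lemma:deterministic} and Lemma~\ref{lemma:repeated_codewords}, Fano-based pinning of $H(\bx_2^n)$, $H(\bx_3^n)$, $H(\bx_2^n+\bx_3^n)$ near $(n/4)\log_2 P$ and $H(p\bx_2^n+q\bx_3^n)$ near $(n/2)\log_2 P$, and then Lemma~\ref{lemma:exG}, Theorem~\ref{thm:balog}, Lemma~\ref{lem:setsum} and Lemma~\ref{lemma:pruned_codebooks} to reach a contradiction. Your only real deviations are cosmetic: you pin the rates via genie pair bounds rather than the paper's two alternative bounds on $H(p\bx_2^n+q\bx_3^n)-H(\bx_2^n+\bx_3^n)$, and, like the paper itself, you leave the explicit $\epsilon$-bookkeeping that produces $\epsilon(p,q)=1/(12d(p,q)+2)$ as omitted detail.
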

\begin{proof}
We start by extending the definition of degrees-of-freedom to deterministic interference channels. Consider a $K$-user deterministic interference channel with input and output alphabets $\{\mc{X}_i\}_{i=1}^K$,$\{\mc{Y}_i\}_{i=1}^K$, defined by
\[
y_{i}(t)=v_i(x_1(t),\ldots,x_K(t)) \text{ , } i=1,\ldots,K; t=1,2,\ldots,
\]
where for each $i$, $x_i(t) \in \mc{X}_i$ must satisfy an average
power constraint $\sum_{t=1}^n x_i^2(t) \le n P$, and
$v_i:\mc{X}_1\times \cdots \times \mc{X}_K \to \mc{Y}_i$ is a
deterministic function. Let $\mc{C}(P)$ be the capacity region of the
channel with power constraint $P$. We define the degrees-of-freedom of
the deterministic channel by 
\[
DoF \dfn \limsup_{P\to \infty} \frac{\max_{\mb{R}\in\mc{C}(P)} \mb{1}^t\mb{R}}{(1/2) \log_2 P}.
\]

Due to Lemma \ref{lemma:deterministic}, the degrees-of-freedom of the GIFC with channel matrix $\tilde{H}$ is upper bounded by the degrees-of-freedom of the deterministic channel
\beq
y_i(t) = \sum_{j=1}^3 \tilde{h}_{ji} x_j(t) \text{ , } i=1,2,3,
\label{eq:detch}
\eeq 
where $x_i(t), y_i(t) \in \mathbb{Z}$,  and where the channel inputs are subject to the average power constraint $\frac{1}{n} \sum_{t=1}^n x_i^2(t)\le P$, for $i=1,2,3$.
We will prove by contradiction that the degrees-of-freedom of this deterministic channel is strictly smaller than $3/2$. Therefore, to get a contradiction, we assume that as $P$ goes to infinity, there are achievable rates  $(R_1(P),R_2(P),R_3(P))$ satisfying 
\[
\limsup_{P\to \infty}\frac{R_1(P)+R_2(P)+R_3(P)}{(1/2) \log_2 P} = \frac{3}{2}.
\]

This implies that there exists an increasing sequence of power
constraints $\{P_m\}_{m=1}^\infty$ with $\lim_{m\to\infty}P_m $ $ = \infty$, such that
\beq
\label{eq:dof3/2}
\lim_{m \to \infty}\frac{R_1(P_m)+R_2(P_m)+R_3(P_m)}{(1/2) \log_2 P_m} = \frac{3}{2}.
\eeq

For a power constraint $P_m$, consider a set of codebooks
$\{\mc{X}_{1,n,m},\mc{X}_{2,n,m},\mc{X}_{3,n,m}\}$ of block-length $n$
with rates $\{R_{1,m},R_{2,m},R_{3,m}\}$ in the deterministic channel
(\ref{eq:detch}), and with decoding functions $\{g_{1,n,m},g_{2,n,m},
g_{3,n,m}\}$ that achieve average error probabilities $\{P_{e,1,n,m},
P_{e,2,n,m}, P_{e,3,n,m}\}$.  We assume that
$\{R_{1,m},R_{2,m},R_{3,m}\}$ satisfy (\ref{eq:dof3/2}) as $m\to
\infty$ and that $\{P_{e,1,n,m}, P_{e,2,n,m}, P_{e,3,n,m}\}$ go to $0$
as $n\to \infty$ for fixed $m$. In addition, let $\bx_i^n$, $i=1,2,3$,
be the independent random vectors induced by the codebooks resulting
from the uniform distribution of the messages. It follows that
$\mc{X}_{i,n,m}\in \intgrs^n$ is the support set of $\bx_i^n$,
$i=1,2,3$. Due to Lemma \ref{lemma:repeated_codewords} we can assume
without loss of generality that the codebooks do not have repeated
codewords. This implies that each $\bx_i^n$ is chosen uniformly in
$\mc{X}_{i,n,m}$, or equivalently that
$Pr(\bx_i^n)=\frac{1}{|\mc{X}_{i,n,m}|}$, for $\bx_i^n \in
\mc{X}_{i,n,m}$, $i=1, 2, 3$.  

Using Fano's inequality we write (where, for simplicity, we suppress
the dependence on $ m $ of the variables $ \bx_1, \bx_2, $ etc.).
\beqa
n(R_{1,m}+R_{2,m}+R_{3,m}-\delta_n) &\le& \sum_{i=1}^3 I(\bx_i^n;\by_i^n)\nonumber\\
&=& \sum_{i=1}^3 \big[ H(\by_i^n)-H(\by_i^n|\bx_i^n) \big]\nonumber\\
&=& H(\bx_1^n+\bx_2^n+\bx_3^n) - H(\bx_2^n+\bx_3^n) + H(p \bx_2^n + q
\bx_3^n) \nonumber \\ & & - H(q \bx_3^n) + H(\bx_3^n) \nonumber\\
&=& H(\bx_1^n+\bx_2^n+\bx_3^n) - H(\bx_2^n+\bx_3^n) + H(p \bx_2^n + q \bx_3^n). \label{eq:sument1}
\eeqa
To handle the first term of (\ref{eq:sument1}), we use the following lemma, which follows from exercise 8.7 of \cite{CT06} and Jensen's inequality.
\begin{lemma}
\label{lemma:gub}
Let $\mb{X}^n=(X_1,\ldots, X_n)$ be a discrete random vector on $\mathbb{Z}^n$. Then,
\[
H(\mb{X}^n) \le \frac{n}{2} \log_2\left[2 \pi e \left(\frac{1}{n} \sum_{i=1}^n Var(X_i) +\frac{1}{12}\right) \right],
\]
\end{lemma}
where $ Var(X_i) = E(X_i^2) - E^2(X_i) $ is the variance of $ X_i $.
Therefore, using the independence among the input signals and the fact
that $ \sum_{t=1}^n Var(x_{i,t}) \leq E\|\bx_i^n\|_2^2 \le n P_m$ we obtain
\beq
H(\bx_1^n+\bx_2^n+\bx_3^n) \le \frac{n}{2} \log\left[2 \pi e \left(3
  P_m+\frac{1}{12} \right)\right]. 
\label{eq:hsum1}
\eeq

The remaining two terms in (\ref{eq:sument1}) will be bounded in two different ways. First we use the following simple bounds on $H(a X + b Y)$, valid for any $a,b \ne 0$ and independent random variables $X$ and $Y$:
\beqa
H(a X+b Y) &\le& H(a X + b Y, a X) = H(a X) + H(a X + b Y | a X) = H(X) + H(Y)  \nonumber\\
H(a X+b Y) &\ge& H(a X+ b Y| b Y) = H(a X) = H(X).
\eeqa
Using these bounds we get
\beqa
H(p \bx_2^n + q \bx_3^n) - H(\bx_2^n+\bx_3^n) &\le& H(\bx_2^n)+H(\bx_3^n) - \max\big\{H(\bx_2^n);H(\bx_3^n) \big\} \nonumber\\
&=& \min\big\{H(\bx_2^n);H(\bx_3^n)\big\}\nonumber\\
&=& \log_2 \min \{|\mc{X}_{2,n,m}|;|\mc{X}_{3,n,m}| \}.
\label{eq:sument2}
\eeqa

We define $f_{min}(n,m)$ such that $\log_2 \min
\{|\mc{X}_{2,n,m}|;|\mc{X}_{3,n,m}| \}= (\frac{1}{4}+f_{min}(n,m)) n
\log_2 P_m$. Note that by the assumed dependence of $ \mc{X}_{i,n,m} $
on $ R_i $, $ \lim_{n\rightarrow \infty} f_{min}(n,m) $ exists for
  each $ m $.
Then, using (\ref{eq:dof3/2}), (\ref{eq:sument1}),
(\ref{eq:hsum1}), and (\ref{eq:sument2}), we have 
\beqa
\frac{3}{2} &=& \lim_{m\to \infty} \frac{\lim_{n\to \infty} (R_{1,m}+R_{2,m}+R_{3,m} - \delta_n)}{(1/2)\log_2 P_m} \nonumber\\
&\le& 1+\frac{1}{2} + \liminf_{m\to \infty} \frac{\lim_{n\to \infty} f_{min}(n,m) \log_2 P_m}{(1/2)\log_2P_m}\nonumber
\eeqa
which implies
\beq
\liminf_{m\to \infty} \lim_{n \to \infty} f_{min}(n,m) \ge 0.
\label{eq:liminf}
\eeq

As a second option, we use Lemma \ref{lemma:gub} to get the upper bound
\beq
H(p \bx_2^n + q \bx_3^n) \le \frac{n}{2} \log_2 \left[2 \pi e \left((p^2+q^2) P_m +\frac{1}{12}\right) \right]
\label{eq:hsum2}
\eeq
which we use to get
\beqa
H(p \bx_2^n + q \bx_3^n) - H(\bx_2^n+\bx_3^n) &\le& \frac{n}{2} \log_2 \left[2 \pi e \left((p^2+q^2) P_m +\frac{1}{12}\right) \right]
- \log_2 \max\big\{|\mc{X}_{2,n,m}|;|\mc{X}_{3,n,m}| \big\}. \nonumber\\
\label{eq:sument3}
\eeqa

We define $f_{max}(n,m)$ such that $\log_2 \max
\{|\mc{X}_{2,n,m}|;|\mc{X}_{3,n,m}| \}= (\frac{1}{4}+f_{max}(n,m)) n
\log_2 P_m$. Note, as above, that $ \lim_{n\rightarrow \infty}
f_{max}(n,m) $ exists for   each $ m $.
Then, using (\ref{eq:dof3/2}), (\ref{eq:sument1}),
(\ref{eq:hsum1}), and (\ref{eq:sument3}), we have 
\beqa
\frac{3}{2} &=& \lim_{m\to \infty} \frac{\lim_{n\to \infty} (R_{1,m}+R_{2,m}+R_{3,m} - \delta_n)}{(1/2)\log_2 P_m} \nonumber\\
&\le& 1+1 -\frac{1}{2} - \limsup_{m\to \infty} \frac{\lim_{n\to \infty} f_{max}(n,m) \log_2 P_m}{(1/2)\log_2 P_m} \nonumber
\eeqa
which implies
\beq
\limsup_{m\to \infty} \lim_{n\to \infty} f_{max}(n,m) \le 0.
\label{eq:limsup}
\eeq

Since $f_{max}(n,m) \ge f_{min}(n,m)$, (\ref{eq:liminf}) and (\ref{eq:limsup}) imply
\beq
\lim_{m\to \infty} \lim_{n\to \infty} f_{max}(n,m)= \lim_{m\to \infty} \lim_{n\to \infty} f_{min}(n,m)= 0
\label{eq:limit}
\eeq
and, as a result, for any $\xi > 0$, there exist $m_0(\xi)$ and $n_0(\xi,m)$ such that, $|f_{min}(n,m)| \le \xi $ and $|f_{max}(n,m)| \le \xi $ for all $m \ge m_0(\xi)$ and $n\ge n_0(\xi, m)$. Therefore, for any $\xi >0$, $m \ge m_0(\xi)$ and $n\ge n_0(\xi, m)$ we have
\begin{align}
\left(\frac{1}{4}-\xi\right) n\log_2 P_m \le \log_2&\left(\min\{|\mc{X}_{2,n,m}|;|\mc{X}_{3,n,m}| \}\right)\le\nonumber\\
& \le \log_2\left(\max\{|\mc{X}_{2,n,m}|;|\mc{X}_{3,n,m}| \}\right) \le \left(\frac{1}{4}+\xi\right) n\log_2 P_m.
\label{eq:cardinalitybounds}
\end{align}

We define $g(n,m)\ge 0$ such that $H(\bx_2^n+\bx_3^n)=(1+g(n,m))\log_2(\max\{|\mc{X}_{2,n,m}|;|\mc{X}_{3,n,m}|\})$.
Then, using (\ref{eq:dof3/2}), (\ref{eq:sument1}), (\ref{eq:hsum2}), and (\ref{eq:limit}) we have
\beqa
\frac{3}{2} &=& \lim_{m\to \infty} \frac{\lim_{n\to \infty} (R_{1,m}+R_{2,m}+R_{3,m} - \delta_n)}{(1/2)\log_2 P_m} \nonumber\\
&\le& 1+1-\frac{1}{2}-0 
-\limsup_{m\to \infty} \frac{\limsup_{n \rightarrow \infty} g(n,m)((1/4)+f_{max}(n,m))\log_2 P_m}{(1/2)\log_2P_m}
\eeqa
which, together with (\ref{eq:limit}) and the condition $g(n,m)\ge 0$, imply
\beq
\lim_{m\to \infty} \limsup_{n\to \infty} g(n,m) = 0
\eeq
and, as a result, for any $\epsilon > 0$, there exist $m_0(\epsilon)$ and $n_0(\epsilon,m)$ such that, $0\le g(n,m) \le \epsilon $  for all $m \ge m_0(\epsilon)$ and $n\ge n_0(\epsilon, m)$. 

Therefore, for any $\epsilon > 0$, $m \ge m_0(\epsilon)$ and $n\ge n_0(\epsilon, m)$ we have 
\[
H(\bx_2^n + \bx_3^n) \le (1+\epsilon) \log_2
\max\{|\mc{X}_{2,n,m}|,|\mc{X}_{3,n,m}| \}.
\]
For any given $c > 1$,
Lemma \ref{lemma:exG} guarantees the existence of $F_{c,n,m} \subseteq
\mc{X}_{2,n,m}\times \mc{X}_{3,n,m}$ such that 
\[
|F_{c,n,m}| \ge \frac{|\mc{X}_{2,n,m}||\mc{X}_{3,n,m}|}{K} \text{ and } \bigg|\mc{X}_{2,n,m}\stackrel{F_{c,n,m}}{+} \mc{X}_{3,n,m}\bigg| \le K'_{n,m} |\mc{X}_{2,n,m}|^{1/2}|\mc{X}_{3,n,m}|^{1/2}
\]
with $K=c/(c-1)$ and $K_{n,m}'=(\max\{|\mc{X}_{2,n,m}|;|\mc{X}_{3,n,m}|\})^{1/2+c \epsilon} (\min\{|\mc{X}_{2,n,m}|;|\mc{X}_{3,n,m}|\})^{-1/2}$. 
Using Theorem \ref{thm:balog} with $F_{c,n,m}$, it follows that there exist $\mc{X}_{2,n,m}'$
$\subseteq \mc{X}_{2,n,m}$, and $\mc{X}_{3,n,m}' \subseteq \mc{X}_{3,n,m}$ such that
\beqa
|\mc{X}_{2,n,m}'| &\ge& \frac{|\mc{X}_{2,n,m}|}{4 \sqrt{2} K} \label{eq:x2'}\\
|\mc{X}_{3,n,m}'| &\ge& \frac{|\mc{X}_{3,n,m}|}{4 \sqrt{2} K} \label{eq:x3'}\\
|\mc{X}_{2,n,m}'+\mc{X}_{3,n,m}'| &\le& 2^{12} K^5 (K'_{n,m})^3 |\mc{X}_{2,n,m}|^{1/2}|\mc{X}_{3,n,m}|^{1/2}. \label{eq:x2'x3'}
\eeqa

From (\ref{eq:x2'}) and (\ref{eq:x3'}) and Lemma
\ref{lemma:pruned_codebooks} it follows that the set of codebooks
$\{\mc{X}_{1,n,m}, \mc{X}_{2,n,m}', \mc{X}_{3,n,m}'\}$ has rates
$(R_{1,m}, R_{2,m}-\frac{1}{n}\log_2(4\sqrt{2}K), R_{3,m}-\frac{1}{n}\log_2(4\sqrt{2}K))$, 
and using the decoding functions $\{g_{1,n,m}, g_{2,n,m}, g_{3,n,m}\}$
the average error probabilities are no larger than $\{P_{e,1,n,m}*32
K^2, P_{e,2,n,m}*32 K^2, P_{e,3,n,m}*32 K^2\}$. Let
$\btx_1^n, \btx_2^{n}, \btx_3^{n}$ be the random vectors induced by the
codebooks $\{\mc{X}_{1,n,m}, $ $ \mc{X}_{2,n,m}', $ $
\mc{X}_{3,n,m}'\}$. Using Fano's inequality and absorbing in
$\delta_n'$ all the constants that vanish with $n\to \infty$ we write: 
\begin{align}
n(R_{1,m}+R_{2,m}+R_{3,m}-\delta_n') \le& \sum_{i=1}^3 I(\btx_i^n;\by_i^n)\nonumber\\
=& \sum_{i=1}^3 \big[ H(\by_i^n)-H(\by_i^n|\btx_i^n) \big]\nonumber\\
=& H(\btx_1^n+\btx_2^n+\btx_3^n) - H(\btx_2^n+\btx_3^n) + H(p \btx_2^n + q \btx_3^n) - H(q \btx_3^n) + H(\btx_3^n) \nonumber\\
=& H(\btx_1^n+\btx_2^n+\btx_3^n) - H(\btx_2^n+\btx_3^n) + H(p \btx_2^n + q \btx_3^n). \label{eq:sument4}
\end{align}
To bound the first term of (\ref{eq:sument4}), as before, we use Lemma \ref{lemma:gub}, obtaining
\beq
H(\btx_1^n+\btx_2^n+\btx_3^n) \le \frac{n}{2} \log\left[2 \pi e \left(3 P_m+\frac{1}{12} \right)\right].
\label{eq:hsum3}
\eeq

To bound the remaining two terms of (\ref{eq:sument4}) we use techniques from additive combinatorics to upper bound the cardinality of the support set of $(p \btx_2^n + q \btx_3^n)$ in terms of the cardinality of the support set of $(\btx_2^n+\btx_3^n)$.  

From Lemma \ref{lem:setsum}, (\ref{eq:x2'}), (\ref{eq:x3'}) and (\ref{eq:x2'x3'}) we have 
\beqa
|p\cdot \mc{X}_{2,n,m}' + q \cdot \mc{X}_{3,n,m}'| &\le& | \mc{X}_{2,n,m}' + \mc{X}_{3,n,m}'|^{d(p,q)} |\mc{X}_{2,n,m}'|^{[1-d(p,q)]/2} |\mc{X}_{3,n,m}'|^{[1-d(p,q)]/2}\nonumber\\
&\le& \left\{ 2^{12} K^5 (K'_{n,m})^3 |\mc{X}_{2,n,m}|^{1/2}|\mc{X}_{3,n,m}|^{1/2}\right\}^{d(p,q)} \nonumber\\
&& \cdot\left(\frac{|\mc{X}_{2,n,m}|}{4 \sqrt{2} K}\right)^{[1-d(p,q)]/2} \left(\frac{|\mc{X}_{3,n,m}|}{4 \sqrt{2} K}\right)^{[1-d(p,q)]/2}\nonumber\\
\eeqa
which together with the bound $H(X) \le \log_2|\mc{X}|$ results in
\beqa
H(p \btx_2^n + q \btx_3^n) &\le& \frac{1}{2} \log_2|\mc{X}_{2,n,m}|+\frac{1}{2} \log_2|\mc{X}_{3,n,m}|+3 d(p,q) \log_2 K'_{n,m} + \tilde{K}_{c,p,q} \nonumber\\
&=& \frac{1}{2} \log_2|\mc{X}_{2,n,m}|+\frac{1}{2}
\log_2|\mc{X}_{3,n,m}|+\frac{3}{2} d(p,q)\left(1+2 c \epsilon\right)
\log_2
\left(\max\{|\mc{X}_{2,n,m}|;|\mc{X}_{3,n,m}|\}\right)\nonumber\\ 
& &-\frac{3}{2} d(p,q) \log_2 \left(\min\{|\mc{X}_{2,n,m}|;|\mc{X}_{3,n,m}|\}\right) + \tilde{K}_{c,p,q} 
\eeqa
where $\tilde{K}_{c,p,q}$ is some constant independent of $n$ and $ m $.

On the other hand using (\ref{eq:x2'}) and (\ref{eq:x3'}) we have
\beqa
H(\btx_2^n+ \btx_3^n) &\ge& \max\{H(\btx_2^n);H(\btx_3^n)\} \nonumber\\
&=& \log_2 \left(\max\{|\mc{X}_{2,n,m}'|;|\mc{X}_{3,n,m}'| \}\right) \nonumber\\
&\ge& \log_2 \left(
\max\{|\mc{X}_{2,n,m}|;|\mc{X}_{3,n,m}|\}\right)-\log_2\left(4
\sqrt{2} K\right) \label{eq:Hsumlb}.
\eeqa
Therefore
\beqa
H(p \btx_2^n + q \btx_3^n)-H(\btx_2^n+ \btx_3^n) &\le& \frac{1}{2} \log_2\left(\min\{|\mc{X}_{2,n,m}|;|\mc{X}_{3,n,m}|\}\right)-\frac{1}{2} \log_2\left(\max\{|\mc{X}_{2,n,m}|;|\mc{X}_{3,n,m}|\}\right)\nonumber\\
& &+\frac{3}{2} d(p,q)\left(1+2 c \epsilon\right) \log_2 \left(\max\{|\mc{X}_{2,n,m}|;|\mc{X}_{3,n,m}|\}\right)\nonumber\\
& &-\frac{3}{2} d(p,q) \log_2 \left(\min\{|\mc{X}_{2,n,m}|;|\mc{X}_{3,n,m}|\}\right) + \tilde{K}_{c,p,q}' \nonumber\\
&\le& 0 + \frac{3}{2} d(p,q)\left(1+2 c \epsilon\right) \log_2 \left(\max\{|\mc{X}_{2,n,m}|;|\mc{X}_{3,n,m}|\}\right)\nonumber\\
& &-\frac{3}{2} d(p,q) \log_2 \left(\min\{|\mc{X}_{2,n,m}|;|\mc{X}_{3,n,m}|\}\right) + \tilde{K}_{c,p,q}' 
\eeqa
where $\tilde{K}_{c,p,q}'$ is some other constant independent of $n$
and $ m $. Using the bounds in (\ref{eq:cardinalitybounds}) we obtain:
\beqa
H(p \btx_2^n + q \btx_3^n)-H(\btx_2^n+ \btx_3^n) &\le& \left[3 \xi d(p,q)+3 c \epsilon d(p,q)\left(\frac{1}{4}+\xi\right)\right] n\log_2 P_m+ \tilde{K}_{c,p,q}',
\eeqa
which, together with (\ref{eq:sument4}) and (\ref{eq:hsum3}), imply
\begin{multline}
\frac{R_{1,m}+R_{2,m}+R_{3,m}}{(1/2)\log_2(P_m)} \le \\ \frac{\frac{1}{2}
  \log_2\left[2 \pi e \left(3 P_m+\frac{1}{12} \right)\right]+\left[3
    \xi d(p,q)+3 c \epsilon d(p,q)\left(\frac{1}{4}+\xi\right)\right]
  \log_2 P_m+ \frac{\tilde{K}_{c,p,q}'}{n}+\delta_n'}{(1/2)\log_2 P_m} 
\end{multline}
which is strictly smaller than $(3/2)$ for small enough $\xi$ and
$\epsilon$, and large enough $m$ and $n$. This contradicts
(\ref{eq:dof3/2}). 

We can refine the above analysis to find
the smallest $ \epsilon(p,q) $ such that assuming that $ DoF(\tilde{H})
= 3/2-\epsilon(p,q) $ does not lead to a contradiction.  The 
expression for $ \epsilon(p,q) $ in the statement of the
lemma is obtained by avoiding bounds on $ H(\bx^n_2+\bx^n_3) $ until the
last step in the analysis.  For example, Lemma~\ref{lemma:exG} is
applied with 
\[ H(\bx^n_2+\bx^n_3) = \left(\frac{H(\bx^n_2+\bx^n_3)}{\log_2
(\max\{|\mc{X}_{2,n,m}|;|\mc{X}_{3,n,m}|\})}\right)\log_2
(\max\{|\mc{X}_{2,n,m}|;|\mc{X}_{3,n,m}|\}). \]
Note that $ H(\bx^n_2+\bx^n_3) \neq H(\btx^n_2+\btx^n_3) $
and we still rely on the lower bound~(\ref{eq:Hsumlb}) for the latter entropy.
The entropy $ H(\bx^n_2+\bx^n_3) $ and
cardinalities in the final expression are bounded by assuming that $
DoF(\tilde{H}) = 3/2-\epsilon(p,q) $, which can easily be shown, following
steps similar to those leading to~(\ref{eq:cardinalitybounds}),
to imply
\[
 \frac{H(\bx^n_2+\bx^n_3)}{n\log_2 P} \leq \left(\frac{1}{4}+\frac{\epsilon(p,q)}{2}\right)+o(1), \]
 and
\begin{multline*}
\left(\frac{1}{4}-\frac{\epsilon(p,q)}{2}\right)+o(1)
\leq \frac{\log_2 (
\min\{|\mc{X}_{2,n,m}|;|\mc{X}_{3,n,m}|\})}{n\log_2 P} 
\leq  \\ 
\frac{\log_2 (
\max\{|\mc{X}_{2,n,m}|;|\mc{X}_{3,n,m}|\})}{n\log_2 P} 
\leq 
\left(\frac{1}{4}+\frac{\epsilon(p,q)}{2}\right)+o(1).
\end{multline*}
The details are omitted.
\end{proof}

\subsection{Proof of Theorem~\ref{thm:rational}}
\label{sec:genK}
By assumption all the entries of $H$ are non-zero and rational, and therefore, there exists a diagonal matrix $D_r$ with positive diagonal entries such that $\bar{H}\dfn H D_r$ has non-zero integer entries. From Lemma \ref{lem:invariance} it follows that 
\beq
\label{eq:dofeq}
DoF(H)=DoF(\bar{H}). 
\eeq

Consider the channel formed by the transmitters and receivers of users
$i$, $j$ and $k$. Let $\bar{H}_{i,j,k}\in \intgrs^{3 \times 3}$ be
the principal minor (matrix) of 
$\bar{H}$ corresponding to the $(i,j,k)$-th rows and columns. 
Due to the independence of the signals of the different 
users it follows that 
\beq
\label{eq:sum3r1}
\max_{(R_1,\ldots,R_K) \in \mc{C}(\bar{H},\mb{1},P \mb{1})} R_i+R_j+R_k= \max_{(R_i,R_j,R_k) \in \mc{C}(\bar{H}_{i,j,k},\mb{1},P \mb{1})} R_i+R_j+R_k,
\eeq
i.e. the other users cannot help users $i$, $j$, and $k$ to improve
their rates. In addition, since interference cannot help a given
receiver in decoding the signal of interest it follows that we can set
some of the cross-gains in  $\bar{H}_{i,j,k}$ to zero without reducing
the maximum achievable sum rate. More specifically, defining
$\hat{H}_{i,j,k}=[\hat{h}_{i,j,k}(m,n)]$ by $\hat{h}_{i,j,k}(m,n) \dfn
\bar{h}_{i,j,k}(m,n)\cdot 1(m\ge n)$ we have 
\beq
\label{eq:sum3r2}
\max_{(R_i,R_j,R_k) \in \mc{C}(\bar{H}_{i,j,k},\mb{1},P \mb{1})} R_i+R_j+R_k \le \max_{(R_i,R_j,R_k) \in \mc{C}(\hat{H}_{i,j,k},\mb{1},P \mb{1})} R_i+R_j+R_k.
\eeq
Furthermore, it is easy to see\footnote{Letting $\hat{H}_{i,j,k}=[a,
    0, 0; b, c, 0; d, e, f]$ in Matlab matrix notation, we can choose
  $\hat{D}_t=[bd, 0, 
    0; 0, ad, 0; 0, 0, ab]$ and $\hat{D}_r=[1/(abd), 0, 0; 0,
    1/a, 0; 0, 0, 1/(abf)]$.} that there exist diagonal matrices
$\hat{D}_t$, $\hat{D}_r$ with positive diagonal entries such that
$\hat{D}_t \hat{H}_{i,j,k} \hat{D}_r=\tilde{H}_{i,j,k}$ where 
\[
\tilde{H}_{i,j,k} = \left[
\begin{array}{ccc} 
1 & 0 & 0 \\ 
1 & p_{i,j,k} & 0 \\ 
1 & q_{i,j,k} & 1 
\end{array} \right]
\] 
for some $p_{i,j,k},q_{i,j,k} \in \intgrs$, $p_{i,j,k},q_{i,j,k} \ne 0$. Using (\ref{eq:sum3r2}), Lemma~\ref{lem:invariance} and Lemma~\ref{lem:mainlemma} we have
\beq
\label{eq:dofineq}
DoF(\bar{H}_{i,j,k})\le DoF(\hat{H}_{i,j,k})= DoF(\tilde{H}_{i,j,k})\le \frac{3}{2}-\epsilon(p_{i,j,k},q_{i,j,k})
\eeq
where $\epsilon(p_{i,j,k},q_{i,j,k}) > 0$. 

Considering every possible subset of users $\{i,j,k\}\subseteq\{1,\ldots,K\}$ and adding the corresponding sum rates, the rate of each user appears ${K-1 \choose 2}$ times in the sum. Therefore, we have
\beqa
{K-1 \choose 2} DoF(H)&\stackrel{(a)}{=}& {K-1 \choose 2} DoF(\bar{H})\nonumber\\
&=&\limsup_{P \to \infty}\frac{\max_{(R_1,\ldots,R_K)\in\mc{C}(\bar{H},\mb{1},P \mb{1})} {K-1 \choose 2}\sum_{i=1}^K R_i}{\frac{1}{2} \log_2 P} \nonumber\\
&\le& \limsup_{P \to \infty}\frac{\sum_{\{i,j,k\}\subseteq \{1,\ldots,K\}}\max_{(R_1,\ldots,R_K)\in\mc{C}(\bar{H},\mb{1},P \mb{1})}  \left(R_i+R_j+R_k\right)}{\frac{1}{2}\log_2 P} \nonumber\\
&\stackrel{(b)}{=}& \limsup_{P \to \infty}\frac{\sum_{\{i,j,k\}\subseteq \{1,\ldots,K\}}\max_{(R_i,R_j,R_k)\in\mc{C}(\bar{H}_{i,j,k},\mb{1},P \mb{1})}  \left(R_i+R_j+R_k\right)}{\frac{1}{2}\log_2 P} \nonumber\\
&\le& \sum_{\{i,j,k\}\subseteq \{1,\ldots,K\}}DoF(\bar{H}_{i,j,k}) \nonumber\\
&\stackrel{(c)}{\le}& {K \choose 3}\left( \frac{3}{2} - \delta \right),
\eeqa
where (a) is due to (\ref{eq:dofeq}), (b) follows from (\ref{eq:sum3r1}), (c) is obtained from (\ref{eq:dofineq}), and where we defined $\delta \dfn \min_{\{i,j,k\}\subseteq \{1,\ldots,K\}} \epsilon(p_{i,j,k},q_{i,j,k}) > 0$. We finally obtain
\beq
DoF(H) \le \frac{K}{2}-\frac{K}{3}\delta < \frac{K}{2},
\eeq
establishing the theorem. 

\section{A 3-user rational GIFC example}
\label{sec:example}
\begin{figure}[t]
\centerline{\includegraphics[width=1.5in]{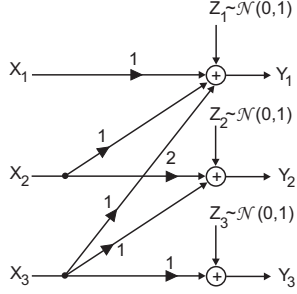}}
\caption{\footnotesize Three-user GIFC of the example in
  Section~\ref{sec:example}. \label{fig:3x3channel2}}
\end{figure}

In this section, we will derive lower and upper bounds on the degrees-of-freedom of the 3-user GIFC with channel matrix
\[
H = \left[
\begin{array}{ccc} 
1 & 0 & 0 \\ 
1 & 2 & 0 \\ 
1 & 1 & 1 
\end{array} \right]
\]
represented in Figure~\ref{fig:3x3channel2}. This channel is a special
case of the one considered in Lemma~\ref{lem:mainlemma}, with $p=2$ and
$q=1$. From this lemma, we obtain: 
\beq
\label{eq:dofexample}
DoF(H) \le \frac{3}{2}-\frac{1}{12 d(2,1) + 2}
\eeq
where $d(p,q)=2 \max\{|p|,|q|\} + 5$ was obtained in Lemma~\ref{lem:setsum}. For the special case of $q=1$ the result of Lemma~\ref{lem:setsum} can be easily strengthened to get $d(p,1)=2 |p| + 3$.
Evaluating (\ref{eq:dofexample}) with $d(2,1)=7$ we obtain $DoF(H) \le
1.4884$.   It should be possible to improve this bound in a number of
ways, such as by
improving on $ d(2,1) = 7 $, and possibly by improving on the power $ 3
$ in the term $
(K')^{3} $ appearing in Theorem~\ref{thm:balog}.  Another possibility
might be to forgo this theorem for a different approach, such as one based on
Exercise~2.5.4 of~\cite{TaoVu06}.

To get a lower bound on $DoF(H)$, we will describe a communication scheme that aims to achieve good interference alignment at receiver 1 by aligning the interfering signals of transmitters 2 and 3, while achieving good separation at receiver 2 between the signal of transmitter 2 and the interference of transmitter 3.

As was done in \cite{CJS07}, we design the communication scheme for a deterministic interference channel and later show how to extend the scheme to the Gaussian channel. We derive the deterministic channel from the Gaussian IFC by removing the Gaussian noise and constraining the inputs to be integers. Let $A_1 = \{0,1\}$, $A_2 = \{0, 2, 4\}$, $A_3 = \{0,2\}$, and $Q=8$. We communicate information independently over $L$ levels, without coding over time. The signal of user $i$ at time $t$ is given by:
\beq
x_i(t) = \sum_{\ell=1}^L m_{i,\ell}(t) Q^{\ell-1},
\eeq
where $m_{i,\ell}(t)\in A_i$ is the message of user $i$ in level $\ell$ at time $t$. 

Since $A_1+A_2+A_3=\{0,1,2,3,4,5,6,7\}$, the signal at receiver 1 can be written as  
\[
y_1(t)=\sum_{i=1}^3 x_i(t) = \sum_{\ell=1}^L w_{1,\ell}(t) Q^{\ell-1}
\]
with $w_{1,\ell}(t)=m_{1,\ell}(t)+m_{2,\ell}(t)+m_{3,\ell}(t)$. Therefore, by computing the $Q$-ary decomposition of $y_1(t)$ we can recover the sums $m_{1,\ell}(t)+m_{2,\ell}(t)+m_{3,\ell}(t)$ at each of the $L$ levels. In addition, since $A_2+A_3 = \{0,2,4,6\}$ we have $m_{1,\ell}(t) = 1(w_{1,\ell}(t) \in \{1,3,5,7\})$, so we can directly determine $m_{1,\ell}(t)$ from $w_{1,\ell}(t)$.

Similarly, at receiver 2 we compute
\[
\frac{y_2(t)}{2} = x_2(t)+\frac{1}{2} x_3(t) = \sum_{\ell=1}^L w_{2,\ell}(t) Q^{\ell-1}
\]
with $w_{2,\ell}(t)=m_{2,\ell}(t)+(1/2) m_{3,\ell}(t) \in \{0,1,2,3,4,5\}$,
from which we can compute $m_{2,\ell}(t)=w_{2,\ell}(t)-[w_{2,\ell}(t) \mod
  2]$. 

Finally, receiver 3 can directly recover $m_{3,\ell}(t)$ at all levels from the received signal $y_3(t)=x_3(t)$.

To compute the achievable degrees-of-freedom of this scheme we note that since $|x_i(t)| < Q^L$ the transmission power at each transmitter is smaller than $Q^{2L}$. On the other hand the rate of users $1$ and $3$ is $L \log_2 2$ while the rate of user $2$ is $L \log_2 3$. Therefore, we obtain for the deterministic channel
\[
DoF \ge \frac{2 L \log_2 2 + L \log_2 3}{\frac{2L}{2} \log_2 8} = \frac{2+\log_2 3}{3} \approx 1.19499.
\]

We now informally argue that the same degrees-of-freedom can be
achieved in the Gaussian channel. We essentially use the same
multi-level coding scheme, but we now encode the signals of each level
over long blocks of time. The lower levels may be severely affected by
noise, but as the level $\ell$ increases, the influence of the noise
becomes smaller, ultimately being insignificant. As a result, the
amount of redundancy that needs to be added to the signal of level $\ell$
to ensure low probability of decoding error goes to 0 as $\ell$ grows to
infinity. It follows that for $\ell$ large enough, the achievable rates
in the Gaussian channel at level $\ell$ approach the achievable rates in
the deterministic channel, and since the rates of the lower levels do
not affect the degrees-of-freedom, we conclude that
$DoF(H)\ge\frac{2+\log_2 3}{3}$ (see \cite{CJS07} for a similar
argument).  

In summary, using the lower and upper bounds that we derived we have,
\[
1.19499 \le DoF(H) \le 1.4884.
\]

\begin{remark} The achievable scheme that we described is simple to analyze because there are no ``carry overs'' across the different levels, and the signals and interference are ``orthogonal'' in the sense that there is no need to code over time to ensure reliable decoding in the deterministic channel. In choosing the sets $A_1$, $A_2$ and $A_3$ we tried to obtain small $|A_2+A_3|$ to align the interference at receiver 1 and simultaneously obtain large $|2 A_2+ A_3|$ to achieve good signal-interference separation at receiver 2. With these design guidelines, one could optimize the sets $A_1$, $A_2$ and $A_3$ (with possibly larger $Q$) in order to improve the achievable degrees-of-freedom. In addition, Han-Kobayashi-type schemes \cite{HK81} at each level where part of the interference is decoded and subtracted may result in better performance than purely orthogonal schemes. 
\end{remark}

\section{Conclusion}
\label{sec:conclusion}
We have shown that
the degrees-of-freedom of $ K > 2 $ user, real, scalar GIFCs is 
sensitive to whether the channel gains have rational or irrational
values, and it is, in fact, discontinuous at all fully connected,
rational gain matrices (up to the invariance property of
Lemma~\ref{lem:invariance}).  Specifically,
Theorem~\ref{thm:irrational} shows that certain fully connected real,
scalar GIFCs with irrational, algebraic coefficients have degrees-of-freedom
exactly equal to the known upper bound of $ K/2 $, the first such
examples for real, scalar GIFCs.  Theorem~\ref{thm:rational}, on the
other hand, shows that if all coefficients are non-zero rationals, the
degrees-of-freedom is strictly bounded away from $ K/2 $, for $ K > 2$.  These
theorems are established by appealing to major results in
mathematics on the inapproximability of irrational, algebraic
numbers by rational numbers, in the case of
Theorem~\ref{thm:irrational}, and on the combinatorics of additive
sets, in the case of Theorem~\ref{thm:rational}.  In the latter case,
previously used information theoretic converse techniques, which are not
sensitive to the rationality of channel coefficients, do not suffice.
We believe these results may have some implications for real GIFCs under
channel parameter uncertainty, since in this case, channel
coefficients with irrational and rational coefficients would have to
be dealt with simultaneously.  Additionally, in practical systems,
computations for encoding and decoding are ultimately restricted to
finite precision, and hence rational numbers, suggesting that
additive combinatorics based bounds on achievable rates may have
practical relevance.  

Throughout this paper, we have been concerned with real, scalar GIFCs.  
Theorem~\ref{thm:irrational} can be readily extended to the complex and
vector cases, revealing an additional class of 
$K/2$ degrees-of-freedom vector GIFCs, complementing those already
known~\cite{CadJaf07}.  The extension of Theorem~\ref{thm:rational} to
complex and vector GIFCs seems less trivial.  For instance, the
example of a $ K/2 $ degrees-of-freedom achieving $K$-user
two-dimensional vector GIFC 
in~\cite{CadJaf07} 
actually has integer coefficients, though they are a very special choice.
Any extension of Theorem~\ref{thm:rational} would have to avoid
such special cases.  More significantly, our crucial Lemma~\ref{lem:mainlemma}
can be shown, using the interference alignment technique
of~\cite{CadJaf07}, to not hold in the complex or vector cases.
Nevertheless, we conjecture that for any fixed (complex) vector dimension,
limitations on the degrees-of-freedom similar to
Theorem~\ref{thm:rational} do exist for a sufficiently large
number of users $ K $.  As noted, establishing such a result would require
analyzing few-user GIFCs that are more complicated than the
$3$-user channel of Lemma~\ref{lem:mainlemma}.  It is likely that tools
from additive combinatorics will still prove useful, though they would
need to be applied differently from the proof of
Lemma~\ref{lem:mainlemma}.   The topic of characterizing the
degrees-of-freedom of rational vector GIFCs and
scalar, complex GIFCs having channel gains with rational real and
imaginary parts is left for future work.

\end{document}